\newfont{\bb}{msbm10 at 12pt}
\newcommand{\p}{\partial}
\newcommand{\dd}{{\rm d}}
\newcommand{\bd}{\begin{definition}}                
\newcommand{\ed}{\end{definition}}                  
\newcommand{\bc}{\begin{corollary}}                 
\newcommand{\ec}{\end{corollary}}                   
\newcommand{\bl}{\begin{lemma}}                     
\newcommand{\el}{\end{lemma}}                       
\newcommand{\bp}{\begin{proposition}}            
\newcommand{\ep}{\end{proposition}}                
\newcommand{\bere}{\begin{remark}}                  
\newcommand{\ere}{\end{remark}}                     
\newcommand{\bt}{\begin{theorem}}
\newcommand{\et}{\end{theorem}}
\newcommand{\be}{\begin{equation}}
\newcommand{\ee}{\end{equation}}
\newcommand{\bit}{\begin{itemize}}
\newcommand{\eit}{\end{itemize}}
\newtheorem{theorem}{Theorem}[section]
\newtheorem{corollary}[theorem]{Corollary}
\newtheorem{lemma}[theorem]{Lemma}
\newtheorem{proposition}[theorem]{Proposition}
\theoremstyle{definition}
\newtheorem{definition}[theorem]{Definition}
\theoremstyle{remark}
\newtheorem{remark}[theorem]{Remark}
\begin{document}
%

\title{Some regularity results for Lorentz-Finsler spaces}

\author{E. Minguzzi\footnote{Dipartimento di Matematica e Informatica ``U. Dini'', Universit\`a degli Studi di Firenze,  Via
S. Marta 3,  I-50139 Firenze, Italy. E-mail:
ettore.minguzzi@unifi.it} \ \ and \  \ S. Suhr\footnote{University of Bochum. E-mail: Stefan.Suhr@ruhr-uni-bochum.de\newline Stefan Suhr is supported by the SFB/TRR 191 ``Symplectic Structures in Geometry, Algebra and Dynamics'', funded by the Deutsche Forschungsgemeinschaft. }}

\date{}

\maketitle

\begin{abstract}
\noindent We prove that  for continuous Lorentz-Finsler spaces timelike completeness implies inextendibility. Furthermore, we prove that under suitable locally Lipschitz conditions on the Finsler fundamental function the continuous causal curves that are locally length maximizing (geodesics) have definite  causal character, either lightlike almost everywhere or timelike almost everywhere. These results generalize previous theorems by Galloway, Ling and Sbierski, and by Graf and Ling.
\end{abstract}



\section{Introduction}
Recently, Galloway, Ling and Sbierski \cite{galloway18b} have proved the following result

\begin{theorem} \label{msd}
A smooth (at least $C^2$) Lorentzian spacetime that is timelike
geodesically complete and globally hyperbolic is $C^0$-inextendible.
\end{theorem}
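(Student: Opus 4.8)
\emph{Proof strategy.} The plan is to suppose, for a contradiction, that $M$ is $C^0$-extendible, i.e.\ that there is an isometric embedding $\iota\colon M\hookrightarrow\tilde M$ of $M$ onto a \emph{proper} subset of a connected manifold $\tilde M$ of the same dimension carrying a continuous Lorentzian metric $\tilde g$ with $\iota^*\tilde g=g$. Since $\dim M=\dim\tilde M$ the image $\iota(M)$ is open, and identifying $M$ with $\iota(M)$ its topological boundary $\partial M\subset\tilde M$ is nonempty. Continuity of $\tilde g$ yields, about each $p\in\partial M$, a precompact coordinate chart $U$ on which $\tilde g$ is squeezed between two constant Minkowski metrics $\eta_-$ and $\eta_+$: every $\eta_-$-timelike tangent vector at a point of $U$ is $\tilde g$-timelike, and every $\tilde g$-timelike one is $\eta_+$-timelike. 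From the extension I would manufacture a future-directed timelike \emph{geodesic} of $M$ that is future-inextendible in $M$ and yet has finite affine length, contradicting timelike geodesic completeness; global hyperbolicity will be the tool that upgrades an ``escaping timelike curve'' (which exists already by Sbierski's method as soon as $M$ is extendible) to an ``escaping timelike geodesic''.

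\emph{Step 1: an escaping timelike curve.} First I would invoke Sbierski's construction to obtain $p\in\partial M$, a point $q\in M\cap U$, and a future-directed $g$-timelike curve $\gamma\colon[0,b)\to M$ with $b<\infty$ and $\gamma(0)=q$ such that $\iota\circ\gamma(t)\to p$ in $\tilde M$ as $t\to b^-$, such that $\gamma$ agrees for $t$ near $b$ with a straight $\eta_-$-timelike segment in the $U$-coordinates, and such that $\gamma$ is future-inextendible in $M$. The last property is automatic: were $\gamma$ to have a future endpoint $x\in M$, continuity of $\iota$ would give $p=\iota(x)\in\iota(M)$, which is false. For the construction one uses the cone squeeze to find a boundary point $\tilde g$-timelike related to a point of $M\cap U$, and then follows the straight $\eta_-$-timelike (hence $\tilde g$-timelike) segment from that $M$-point toward and slightly past that boundary point; starting in the open set $M$ and ending outside $M$, this segment first meets $\tilde M\setminus M$ at some parameter $b$, and its restriction to $[0,b)$ is $\gamma$, with limit point $p\in\partial M$. (A time reversal lets us take $\gamma$ future-directed.)

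\emph{Step 2: from a curve to a geodesic.} Pick $t_0<b$ with $\gamma|_{[t_0,b)}\subset U$ and lying in the $\eta_-$-straight part of $\gamma$, set $q_0:=\gamma(t_0)$, and fix $t_n\uparrow b$. Since $M$ is globally hyperbolic and $q_0\ll\gamma(t_n)$ (timelike related along the $\eta_-$-segment), Avez--Seifert provides maximizing causal geodesics $\sigma_n$ in $M$ from $q_0$ to $\gamma(t_n)$; having positive length $d_M(q_0,\gamma(t_n))$, they are timelike. The cone squeeze bounds $-g(\dot\gamma,\dot\gamma)$ below by a fixed positive multiple of the squared coordinate speed on $U$, giving the uniform \emph{lower} bound
\[
d_M\big(q_0,\gamma(t_n)\big)\ \ge\ \mathrm{length}\big(\gamma|_{[t_0,t_n]}\big)\ \ge\ c_0\ >\ 0\qquad(n\ \text{large}).
\]
The crux is the matching \emph{upper} bound: using global hyperbolicity of $M$ together with the Minkowski cone control near $p$ one shows that the maximizers $\sigma_n$ are eventually confined to a fixed compact neighbourhood of $p$, so that $d_M(q_0,\gamma(t_n))\le L<\infty$ uniformly. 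Granting this, parametrize each $\sigma_n$ by proper time on $[0,\ell_n]$ with $\ell_n=d_M(q_0,\gamma(t_n))\in[c_0,L]$; along a subsequence $\ell_n\to\ell_*\in[c_0,L]$ and $\sigma_n\to\sigma$ uniformly, with $\sigma\colon[0,\ell_*]\to\tilde M$ a geodesic, $\sigma(0)=q_0$, and $\sigma(\ell_*)=\lim_n\gamma(t_n)=p\in\partial M$. Shrinking $\ell_*$ to the first parameter at which $\sigma$ hits $\partial M$ if necessary, $\sigma|_{[0,\ell_*)}\subset M$ and $\sigma$ is a timelike geodesic (its length is $\ell_*\ge c_0>0$) that is future-inextendible in $M$ and of finite affine length $\ell_*$.

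\emph{Conclusion and main obstacle.} Such a geodesic $\sigma$ contradicts the assumed future timelike geodesic completeness of $M$; hence $M$ is $C^0$-inextendible. I expect the real obstacle to be the uniform upper bound on $d_M(q_0,\gamma(t_n))$ in Step 2 --- equivalently, confining the maximizing geodesics near the boundary point and hence forcing the limiting timelike geodesic to reach $\partial M$ in \emph{finite} proper time. This is exactly the place where global hyperbolicity is indispensable (for existence, convergence, and confinement of the maximizers), and it cannot be discarded: the open past light cone in Minkowski space is globally hyperbolic and $C^0$-extendible, and fails timelike completeness precisely because of such an escaping geodesic --- while, conversely, timelike completeness alone does not suffice either, so both hypotheses must genuinely be used. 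A secondary technical point, already present in Sbierski's argument, is carrying out Step 1 in the merely continuous setting, where one must keep careful track of how $\tilde g$-timelikeness and $\eta_\pm$-timelikeness interact across $\partial M$.
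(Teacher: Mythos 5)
Your outline reproduces the Galloway--Ling--Sbierski strategy, but the step you yourself label as the crux --- the uniform upper bound $d_M(q_0,\gamma(t_n))\le L<\infty$ --- is precisely where the work lies, and the justification you sketch does not close it. Global hyperbolicity gives compactness of each diamond $J^+(q_0)\cap J^-(\gamma(t_n))$ separately, but since $\gamma(t_n)$ leaves every compact subset of $M$ there is no uniform compact set containing all of these diamonds, and nothing you have said prevents the maximizers $\sigma_n$ from making long excursions far outside the chart $U$ (where the $\eta_\pm$ comparison is unavailable) before returning to $\gamma(t_n)$; in particular they are certainly not ``confined to a fixed compact neighbourhood of $p$'', since $q_0$ sits at fixed positive distance from $p$. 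What is actually needed is a causal-convexity argument: choose $q_0$ inside $U$ and nest $\eta_+$-diamonds so that every $g$-causal curve from $q_0$ to $\gamma(t_n)$ is trapped in $U$ (using that the coordinate time $x^0$ is a time function for the $\eta_+$-cones, hence for the $g$-cones, on $U$); only then does the fiberwise bound $\sqrt{-g(\dot\sigma,\dot\sigma)}\le K\,\dot\sigma^0$ give the uniform length bound. Two smaller gaps of the same kind: to extract a convergent subsequence of geodesics you must show the unit initial velocities $\dot\sigma_n(0)$ stay in a compact subset of the timelike hyperboloid at $q_0$ (bounded away from the light cone --- this follows from $\ell_n\ge c_0$ plus upper semicontinuity of length, but must be argued), and the limit curve cannot literally be ``a geodesic of $\tilde M$'' since $\tilde g$ is only continuous; the correct statement is that the $\sigma_n$ converge to a timelike geodesic of $(M,g)$ that is future inextendible in $M$ with proper time at most $L$.

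For comparison, the proof given in this paper (Theorem \ref{ppo}, from which Theorem \ref{msd} follows as a special case of Theorem \ref{jif}) resolves the confinement problem by a different device that eliminates global hyperbolicity of $M$ altogether: instead of joining a fixed $q_0$ to points approaching an arbitrarily given boundary point, one forms a compact coordinate pyramid $K$ with apex at the boundary point and replaces the target by the point $r\in\p M\cap K$ that minimizes the time coordinate over $\p M\cap K$; this minimality forces the causal diamond below $r$ (minus $r$ itself) to lie in $M$, and the maximizer is produced inside a small globally hyperbolic neighbourhood, which every point possesses regardless of the global causality of $M$ (the delicate case being $r\in\p K$, handled by the quantitative $B,\mu,\epsilon$ estimate). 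Consequently your closing assertion that global hyperbolicity ``cannot be discarded'' is mistaken: it is indispensable for your particular route (Avez--Seifert in the large), but not for the conclusion --- removing it is exactly improvement (c) announced in the introduction.
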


In this work we are going to improve this result in three directions: (a) in the regularity of the starting manifold,  (b) in the possible (Finslerian) anisotropy of the spacetimes considered, and (c) in the causality of the starting manifold (we place no condition). In fact we prove the next result

\begin{theorem} \label{jif}
A $C^0$ proper Lorentz-Finsler space which is asymptotically timelike geodesically complete is inextendible as a $C^0$ proper Lorentz-Finsler space.
\end{theorem}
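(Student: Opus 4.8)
The plan is to argue by contradiction, following the scheme of Galloway--Ling--Sbierski and Graf--Ling but replacing the geodesic of the extension (unavailable here, the extension being only $C^0$) by a maximizing geodesic manufactured inside a sandwiching, locally globally hyperbolic neighbourhood of a boundary point. So suppose $(M,L)$ has a $C^0$ proper Lorentz--Finsler extension $(\tilde M,\tilde L)$; identify $M$ with its open image, so $\tilde L|_M=L$ and $\partial M\neq\emptyset$. Fix $p\in\partial M$. Using properness and continuity, choose a neighbourhood $U\ni p$ in $\tilde M$ on which $\tilde L$ is pinched between two smooth globally hyperbolic cone structures $\check g\prec\tilde L\prec\hat g$ (the $C^0$-stability of global hyperbolicity for proper cone structures; alternatively a direct construction in a Minkowski-like chart), and fix a smooth Lorentzian metric $h$ on $U$ whose cones lie strictly inside those of $\tilde L$.

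First I would produce the boundary-crossing curve. An $h$-timelike geodesic through $p$ is $\tilde L$-timelike; since $p\in\bar M$ its backward portion meets $M$, and taking the last sub-arc terminating at $p$ that is contained in $M$ --- i.e. cutting at the first backward exit time out of $M$ and, if necessary, relabelling $p$ --- yields a future-directed $\tilde L$-timelike $\gamma:[0,1)\to M$, future-inextendible in $M$, with $\gamma(t)\to p\in\partial M$ and finite $\tilde L$-length. Put $q:=\gamma(0)\in M$ and $q_n:=\gamma(1-1/n)\to p$; then $q\ll_{\tilde L}q_n$ for all $n$, and $q\ll_{\tilde L}p$, the latter via an $h$-timelike curve.

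Next, the upgrade and the limit. As $(U,\hat g)$ is globally hyperbolic, the $\tilde L$-causal diamonds from $q$ to $q_n$ are relatively compact in $U$, so $d_{\tilde L}$ is finite there and an Avez--Seifert argument valid for $C^0$ proper cone structures produces an $\tilde L$-maximizing timelike geodesic $\rho_n:[0,a_n]\to U$ from $q$ to $q_n$, of length $d_{\tilde L}(q,q_n)>0$. By the limit curve theorem a subsequence converges to a future-directed $\tilde L$-causal maximizing curve $\rho$ from $q$ to $p$ of $\tilde L$-length $d_{\tilde L}(q,p)>0$. Since $q$ is interior and $p\in\partial M$, cut $\rho$ at the first parameter $s_\ast$ at which it meets $\partial M$: then $\rho|_{[0,s_\ast)}\subset M$, it is future-inextendible in $M$, it leaves every compact subset of $M$, it is locally length-maximizing, hence (as $\tilde L=L$ on $M$) a geodesic of $(M,L)$, and it has finite, positive $L$-length. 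A maximizing causal geodesic of positive length issuing from $q$ with $q\ll p$ is timelike --- this is where the causal character is pinned, invoking if needed the paper's definite-causal-character theorem, or directly the form taken by the completeness hypothesis in the $C^0$ case --- so $\rho|_{[0,s_\ast)}$ is a future-inextendible timelike geodesic of finite length escaping the compact subsets of $M$, contradicting asymptotic timelike geodesic completeness.

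The main obstacle is the third step. Because the extension is merely $C^0$, there is no geodesic of $\tilde M$ to restrict to $M$ (as one does in the $C^2$ case), so the incomplete geodesic must be built from scratch inside a sandwiching, locally globally hyperbolic neighbourhood, and one must then verify that the limiting maximizer (i) genuinely fails to extend in $M$ rather than merely asymptoting --- handled by cutting at the first hitting time of $\partial M$ --- and (ii) keeps positive length, hence timelike character, so that the completeness hypothesis actually bites. The remaining Finslerian and low-regularity bookkeeping --- reparametrizations, and the validity of limit-curve and maximization arguments for $C^0$ proper cone structures --- is routine within the framework set up earlier in the paper.
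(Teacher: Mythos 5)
Your scaffolding (identify $M$ with its image, produce a timelike curve hitting $\partial M$, pass to a locally globally hyperbolic neighbourhood, run Avez--Seifert and a limit-curve argument to get a maximizer into the boundary, cut at the first exit time) matches the paper's Lemma 2.2 and the opening of Theorem 2.3. But there is a genuine gap at exactly the point you flag as ``the main obstacle'': you do not actually establish that $\rho|_{[0,s_\ast)}$ has \emph{positive} length, and neither of the two escape routes you offer works. The definition of asymptotic timelike geodesic completeness only constrains maximizers that escape every compact set \emph{and have positive length}, so if the portion of $\rho$ inside $M$ is lightlike of zero length --- with all of $d_{\tilde L}(q,p)>0$ accumulated after the first hitting of $\partial M$ --- the completeness hypothesis says nothing and no contradiction arises. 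You cannot rule this out by invoking the definite-causal-character theorem (Theorem \ref{jol}): that result requires the Lipschitz condition on $f(F)$, which is strictly stronger than $C^0$; this is precisely why the paper states Theorem \ref{jjf} separately. Nor does ``the form taken by the completeness hypothesis'' help, since positive length is a hypothesis of that definition, not a conclusion. In a merely $C^0$ proper Lorentz--Finsler space a maximizer between chronologically related points need not have definite causal character, so its initial segment up to the first boundary crossing can genuinely be null.

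The paper's proof is organized around overcoming exactly this. Rather than aiming the maximizer at an arbitrary boundary point $p$ and cutting at the first exit, it builds a compact pyramid $K$ with apex at the boundary point and chooses the target $r$ to \emph{minimize} the time function $\tilde x^0$ over $\partial M\cap K$. If $r$ is interior to $K$, this minimality forces the entire causal diamond below $r$ (minus $r$ itself) to lie in $M$, so the maximizer stays in $M$ until its very endpoint and its restriction inherits the full positive length. If $r\in\partial K$, a quantitative argument takes over: an auxiliary flat metric $\eta^+$ with cones wider than $C'$ and a subcone $\tilde K\subset K$ with $\tilde K\setminus\{r\}\subset M$ force any premature exit point $z$ to satisfy $x^0(z)\ge -Bb/(B+\mu)$, and combining the resulting upper bound $(1+\epsilon)\tfrac{B}{B+\mu}b$ on the post-exit length with the lower bound $(1-\epsilon)b$ on the total length yields a contradiction for small $\epsilon$ unless the pre-exit segment carries positive length. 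Some version of this localization and estimate is unavoidable in the $C^0$ setting; your proposal as written stops just short of it.
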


This theorem seems optimal for what concerns the category of Lorentz-Finsler spaces. We mention that the issue of inextendibility has also been approached using other tools pertaining to low regularity geometry. In fact Grant, S\"amann and Kunzinger  have obtained an inextendibility result \cite[Theorem 5.3]{grant18} analogous to  Theorem  \ref{jif} by using the theory of Lorentzian length spaces \cite{kunzinger18}. Their result also does not assume global hyperbolicity but, contrary to our own, it does not imply  Theorem \ref{msd} since when restricted to Lorentzian manifolds it demands Lipschitz metrics on the original and extended spacetimes and strong causality of the original spacetime. However, it covers  other non regular circumstances that can only be approached through the formalism of Lorentzian length spaces, e.g.\ non-$C^1$ manifolds.

Graf and Ling \cite{graf18} have also obtained a generalization of Theorem  \ref{msd}. They  removed the global hyperbolicity assumption at the price of a strengthening of the regularity condition on the original and extended spacetimes, that is, they demand the metrics to be Lipschitz. Their result is also implied by our Theorem \ref{jif} or by Theorem  \ref{jjf} below.

Finally, the reader is referred to \cite{sbierski15,galloway17b,chrusciel18} for inextendibility  results  that do not use the assumption of timelike completeness.

We need to clarify much of the terms entering  Theorem \ref{jif}. The notion of {\em closed Lorentz-Finsler space} has been recently introduced in \cite{minguzzi17} and can be regarded as  a generalization of the concept of  (regular, non-degenerate) {\em closed cone structure} \cite{bernard16,minguzzi17} aimed at providing the weakest and most natural conditions on the Finsler fundamental function $F$ in order to obtain most metric results of causality theory.

More precisely, let $x\mapsto C_x\subset TM\backslash 0$ be a distribution of non-empty closed sharp convex cones. A {\em closed cone structure} is such a cone distribution for which $C=\cup_x C_x$ is a closed subset of the slit tangent bundle $TM\backslash 0$. This closure property is equivalent to the upper semi-continuity of the cone distribution \cite[Prop.\  2.3]{minguzzi17} \cite[Thm.\ 1.1.1, Prop.\ 1.1.2]{aubin84}.
Moreover, it is called a {\em proper cone structure} if  additionally its interior is non-empty over every fiber $(\textrm{Int} C)_x\ne \emptyset$, or equivalently if it contains a continuous distribution of proper cones (closed, sharp, convex, with non-empty interior). The set $(\textrm{Int} C)_x$ is the {\em timelike cone} of the theory. A $C^0$ cone structure  is one for which the cone distribution is continuous and in this case $(\textrm{Int} C)_x=\textrm{Int} C_x$, cf.\ \cite[Proposition  2.6]{minguzzi17}.

Let us consider a closed cone structure for which a concave, positive homogeneous function $F\colon C\to [0,+\infty)$ has been given. Furthermore, suppose  that $F^{-1}(0)=\p C$ (this condition is not imposed in \cite{minguzzi17} thus the notion of  Lorentz-Finsler space used in this work is slightly more restrictive than that considered in that work). Let us define a cone distribution on $M^\times=M\times \mathbb{R}$ through
\begin{equation} \label{nhz}
C^\times_{P}=\{(y,z) \colon y\in C_p, \ z\in T_r \mathbb{R}, \ \vert z\vert \le F(y) \}.
\end{equation}
where $P=(p,r)$. Following \cite{minguzzi17} we say that  $(M,F)$ is a  {\em  closed/proper/$C^0$ Lorentz-Finsler space} if $(M^\times,C^\times)$ is a closed/proper/$C^0$ cone structure.
The $C^0$ Lorentzian spacetimes are perhaps the simplest examples of $C^0$ proper Lorentz-Finsler spaces \cite[Theorem 2.51]{minguzzi17}. A closed/proper/$C^0$-proper  Lorentz-Finsler space induces a closed/proper/$C^0$-proper cone structure $(M,C)$.

A {\em continuous causal curve} is an absolutely continuous curve with causal tangent vector almost everywhere. It becomes Lipschitz whenever parametrized with respect to $h$-arc length, where $h$ is any Riemannian metric. A {\em timelike curve} is a   piecewise $C^1$ continuous causal curve with timelike tangent.

In a closed Lorentz-Finsler space the Lorentz-Finsler distance is defined in the usual way. The length of a continuous causal curve $x\colon [0,1]\to M$, is
\[
\ell(x)=\int_0^1 F(\dot x) \dd t.
\]
The causal relation $J$ is the set of pairs $(p,q)$ such that $p=q$ or there is a continuous causal curve connecting $p$ to $q$. The chronological relation $I$ is the set of pairs $(p,q)$ such that  there is a timelike curve connecting $p$ to $q$. Clearly, $I\subset J$. The chronological relation is open.

The (Lorentz-Finsler) distance is defined by: for $(p,q) \notin J$, we set $d(p,q)=0$, while for $(p,q)\in J$
\begin{equation}
d(p,q)=\textrm{sup}_x \ell(x) ,
\end{equation}
where $x$ runs over the continuous causal curves which connect $p$ to $q$.

In closed Lorentz-Finsler spaces every point admits globally hyperbolic neighborhoods \cite[Proposition  2.10]{minguzzi17}, the length functional is upper semi-continuous \cite[Theorem 2.54]{minguzzi17}, limit curve theorems hold true, and under global hyperbolicity any two points are connected by a maximizing continuous causal curves \cite[Theorem 2.55]{minguzzi17}.


In a closed Lorentz-Finsler space several definitions of causal geodesic are  possible. In this work we shall be concerned with locally length maximizing continuous causal curves. We shall call them {\em causal maximizers} so avoiding the term {\em geodesic}. Only at the end of the paper we shall discuss  whether these curves  coincide with the geodesics as they have been defined in other papers.

Under the low regularity conditions of this work we do not have a notion of affine parameter at our disposal. Therefore, we need a suitable notion of geodesic completeness.
\begin{definition}
A closed Lorentz-Finsler space is {\em future asymptotically timelike geodesically complete} if every causal maximizer $\gamma\colon [0,a)\to M$  that (a) escapes every compact set, and (b) has positive length, has actually infinite length.
\end{definition}

Finally, a closed Lorentz-Finsler space $(M',F')$  is an {\em extension} of the closed Lorentz-Finsler space $(M,F)$ if they have the same dimension and there is a $C^1$ embedding $\varphi \colon M\to M'$ such that $\varphi_*C=C'$, $\varphi^*F'=F$ and $\p \varphi(M)\ne \emptyset$. The spacetime $(M,F)$ is  {\em inextendible} if it has no extension.
We have so completed the introduction of the elements entering  Theorem  \ref{jif}.

It is certainly pleasing when the  causal maximizers have a definite causal character, namely a tangent which is almost everywhere timelike or almost everywhere lightlike. In fact, under this property one has  {\em future asymptotic timelike geodesic completeness} provided the causal maximizers $\gamma\colon [0,a)\to M$ which are  timelike and escape every compact set have infinite length.


In the Lorentzian case Graf and Ling \cite{graf18} have shown that causal maximizers have a definite causal character under a Lipschitz condition on the metric. In Section \ref{nhd} we generalize their result to the Lorentz-Finsler case, by placing a Lipschitz condition on the fundamental Finsler function $F$, see Theorem \ref{jol}. Moreover, we prove that all definitions of {\em geodesic} proposed in the literature really coincide for locally Lipschitz Lorentz-Finsler spaces. Another consequence is

\begin{theorem} \label{jjf}
Timelike complete $C^0$ proper Lorentz-Finsler spaces which are  Lipschitz in the sense of Theorem  \ref{jol} cannot be extended as $C^0$ proper Lorentz-Finsler spaces.
\end{theorem}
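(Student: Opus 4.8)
The plan is to derive Theorem~\ref{jjf} from the main inextendibility result Theorem~\ref{jif}: it suffices to show that, for a $C^0$ proper Lorentz-Finsler space, timelike completeness together with the Lipschitz condition of Theorem~\ref{jol} implies (future and past) asymptotic timelike geodesic completeness, after which Theorem~\ref{jif} applies verbatim.

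First I would invoke Theorem~\ref{jol}: in a $C^0$ proper Lorentz-Finsler space satisfying the relevant Lipschitz condition, every causal maximizer has definite causal character, i.e.\ it is either lightlike almost everywhere or timelike almost everywhere. Since $F$ is nonnegative, positive homogeneous and vanishes exactly on $\partial C$, a causal maximizer which is lightlike almost everywhere has length $\ell=\int F(\dot\gamma)\,\dd t=0$, whereas one which is timelike almost everywhere has $F(\dot\gamma)>0$ almost everywhere and hence positive length. Thus, among causal maximizers, having positive length is equivalent to being timelike almost everywhere; and by the equivalence of the various notions of geodesic in the locally Lipschitz setting (also established in Section~\ref{nhd}), such a curve is a timelike geodesic in the sense entering the completeness hypothesis.

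Next I would use the hypothesis that $(M,F)$ is timelike complete, i.e.\ every timelike causal maximizer that escapes every compact set has infinite length. Combined with the previous step, any causal maximizer $\gamma\colon[0,a)\to M$ that escapes every compact set and has positive length is timelike almost everywhere, hence of infinite length; this is exactly the definition of future asymptotic timelike geodesic completeness. Running the same argument on the time-dual Lorentz-Finsler space (reverse the cone distribution; it is again $C^0$ proper and Lipschitz in the sense of Theorem~\ref{jol}, and the timelike completeness hypothesis is time-symmetric) yields the past version as well. Theorem~\ref{jif} then gives inextendibility of $(M,F)$ as a $C^0$ proper Lorentz-Finsler space, which is the claim.

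As for difficulty, the substantive content is entirely carried by the two results being cited, Theorem~\ref{jif} and Theorem~\ref{jol}; the deduction above is merely glue. The only point requiring genuine care is the identification of curves: one must verify that a timelike-almost-everywhere causal maximizer escaping every compact set is precisely the sort of timelike geodesic addressed by the completeness assumption, so that the translation between ``causal maximizer of positive length'' and ``timelike geodesic'' lets no curve slip through — and this is exactly what the coincidence of the geodesic notions in the Lipschitz category secures.
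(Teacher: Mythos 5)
Your proposal is correct and coincides with the paper's intended argument: the paper states Theorem~\ref{jjf} as an immediate consequence of Theorem~\ref{jol} (definite causal character of maximizers, so positive length forces timelike a.e.) combined with Theorem~\ref{jif}, exactly the reduction you carry out, including the time-dual step needed for the past version of asymptotic completeness. The only remark is that your extra appeal to the coincidence of geodesic notions (Theorem~\ref{kki}) is not needed, since both the completeness hypothesis and the definition of asymptotic completeness are already phrased in terms of causal maximizers.
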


As for our conventions,  in this work the manifolds $M$ and $M'$
are assumed to be connected, Hausdorff, second countable (hence paracompact) and of dimension $n+1$. Furthermore, they are $C^r$, $1\le r\le \infty$. The Lorentzian signature is $(-,+,\cdots,+)$. Greek indices run from $0$ to $n$. Latin indices from $1$ to $n$.  The Minkoski metric is denoted $\eta_{\alpha \beta}$, so $\eta_{00}=-1$, $\eta_{ii}=1$. The subset symbol $\subset$ is reflexive. The boundary of a set $S$ is denoted $\p S$.
In order to simplify the notation we often use the same symbol for a curve or its image. A causal diamond is a set of the form $J^+(p)\cap J^-(q)$, and similarly for the chronological diamond where $I$ replaces $J$. Many statements of this work admit, often without notice, time dual versions obtained by reversing the time orientation of the spacetime.

\section{Inextendibility}

For notational simplicity, in what follows we shall regard $M$ as a subset of $M'$ thus, for instance, we shall write $\p M$ in place of $\p \varphi(M)$.

We need to define the future and past boundaries.
\begin{definition}
Let $(M,F)$ and $(M',F')$ be proper Lorentz-Finsler spaces and let the latter be an extension of the former.
The {\em future boundary} $\p^+ M$ is the subset of $\p M$ which consists of endpoints $\gamma(1)$ of timelike curves $\gamma\colon [0,1]\to M'$, $\gamma([0,1))\subset M$. The {\em  past boundary} is defined dually.
\end{definition}

\begin{lemma}
Let $(M,F)$ and $(M',F')$ be proper Lorentz-Finsler spaces and let the latter be an extension of the former, then $\p^+M\cup \p^- M\ne \emptyset$. If $\p^+ M\ne \emptyset$
then for every $q\in \p^+ M$ and every neighborhood $U$ of $q$ we can find a future directed timelike curve
$\beta \colon [0,1]\to U$ with $\beta([0,1)) \subset M$, $\beta(1)\in \p^+ M$,
and a local (flat) Minkowski metric $\eta$ defined on a neighborhood of $\beta(1)$ inside $U$
such that (a) the  $\eta$-cones are contained in $\mathrm{Int} C'$,  (b) $\beta$ is a timelike $\eta$-geodesic, and (c) let $w$ be a parallel vector (for the flat affine connection induced by $\eta$) coincident with $\dot \beta$ at $\beta(1)$, then  $\eta(w,v)<0$ for every $v\in C'$.

A dual statement holds if $\p^- M\ne \emptyset$.
\end{lemma}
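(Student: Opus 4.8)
The plan is to first show that $\p^+M\cup\p^-M\ne\emptyset$. Since $M'$ is an extension, $\p M\ne\emptyset$; pick $q\in\p M$ and a point $p\in M$ close to $q$. Because $M'$ is a proper Lorentz-Finsler space, the timelike cone $\mathrm{Int}\,C'$ is open and non-empty over $q$, so through $q$ there pass timelike curves in $M'$ in every ``timelike direction''. Moving a little from $q$ along such a curve, one enters $M$ (since $M$ is open in $M'$ and $p\in M$ is nearby, while $q\in\p M$): concretely, take a small timelike curve $\sigma$ through $q$ contained in a convex $\eta$-neighborhood $U$ of $q$ whose $\eta$-cones lie in $\mathrm{Int}\,C'$; the two ``halves'' $\sigma\cap I^\pm(q,\eta)$ cannot both be disjoint from $M$ near $q$, for otherwise a full $\eta$-neighborhood of $q$ would miss $M$, contradicting $q\in\p M$ (here one uses that $M$ is a full-dimensional open subset and $\p M$ has empty interior). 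So $q$ is reached from inside $M$ either to the future or to the past along a timelike curve, i.e.\ $q\in\p^+M$ or $q\in\p^-M$.

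Next, assume $\p^+M\ne\emptyset$, fix $q\in\p^+M$ and a neighborhood $U$ of $q$. By definition there is a future-directed timelike $\gamma\colon[0,1]\to M'$ with $\gamma([0,1))\subset M$, $\gamma(1)=q$. Shrinking and reparametrizing, I may assume $\gamma$ lies in a coordinate ball around $q$ inside $U$ on which a flat Minkowski metric $\eta$ is defined with $\eta$-cones contained in $\mathrm{Int}\,C'$ (such $\eta$ exists because $C^\times$ is a proper cone structure, so over a small neighborhood one can inscribe a continuous, indeed constant in coordinates, proper subcone; this is exactly the mechanism already used in \cite{minguzzi17} to get locally Minkowskian inscribed cones). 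Now $\gamma$ need not be an $\eta$-geodesic, but its $\eta$-length between nearby parameter values is positive, so I can replace a final segment of $\gamma$ by the straight $\eta$-segment joining a point $\gamma(t_0)$, $t_0<1$, to $q$: this straight segment is still $\eta$-timelike (both its endpoints are $\eta$-timelike related, as they are joined by an $\eta$-causal curve with timelike tangent a.e., hence $q\in I^+(\gamma(t_0),\eta)$), it lies in $U$ if $t_0$ is close enough to $1$, its interior lies in $M$ (since $M$ is open and contains $\gamma(t_0)$ together with all of $I^+(\gamma(t_0),\eta)\cap I^-(q,\eta)$ once $t_0$ is close enough—here one must check the $\eta$-diamond stays inside $M$, which follows from $\gamma(s)\to q$ and openness), and its endpoint is still in $\p^+M$. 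Call this new curve $\beta$; by construction it satisfies (a) and (b).

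For (c), let $w$ be the $\eta$-parallel (constant, in the affine coordinates) extension of $\dot\beta(1)$. Since $\dot\beta(1)$ is future-directed $\eta$-timelike and the $\eta$-cones lie in $\mathrm{Int}\,C'$, the future $\eta$-timelike cone at $q$ is an open convex cone whose closure, minus the origin, lies in $\mathrm{Int}\,C'_q\cup\{0\}$; hence $\overline{C'_q}=C'_q$ (closedness of the cone structure) is a sharp convex cone strictly containing the $\eta$-cone. The linear functional $v\mapsto\eta(w,v)$ is, by the reversed Cauchy–Schwarz inequality for Minkowski metrics, strictly negative on the closed future $\eta$-cone minus $\{0\}$; I must upgrade this to strict negativity on all of $C'_q$, which is larger. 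This is the one genuinely delicate point: it requires that $C'_q$ still lies inside the open half-space $\{\eta(w,\cdot)<0\}$, equivalently that $w$ is ``$\eta$-timelike enough'' relative to the possibly much wider cone $C'_q$. The fix is to not take $\eta$ arbitrary: when choosing the inscribed Minkowski metric I will in addition arrange that $C'_q$ is contained in the $\eta$-future causal cone (not merely that the $\eta$-cone is inside $C'$). This is possible because $C'_q$ is a sharp closed convex cone, hence is contained in some open half-space, and by a standard compactness/continuity argument (as in the construction of locally Minkowskian neighborhoods in \cite{minguzzi17}) one can pick a Minkowski $\eta$ on a small neighborhood of $q$ with $C'\subset \overline{C'{}^{\eta}_{\mathrm{future}}}$ fiberwise and $\eta$-cones still inside $\mathrm{Int}\,C'$ along $\beta$; then $\dot\beta(1)$ being $\eta$-timelike future directed forces $\eta(w,v)<0$ for all $v\in\overline{C'{}^\eta}\setminus\{0\}\supset C'_q$, since $\eta(w,\cdot)$ is strictly negative on the whole closed $\eta$-future cone minus the origin. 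The dual statement for $\p^-M$ follows by reversing the time orientation. The main obstacle, then, is precisely the simultaneous choice of $\eta$ making $\eta$-cones lie inside $\mathrm{Int}\,C'$ while $C'$ lies inside the $\eta$-causal cone, uniformly along the curve $\beta$; everything else is routine convex geometry and openness of $I$.
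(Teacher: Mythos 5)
There are genuine gaps in all three parts of your argument. The most serious one is your treatment of (c). You correctly identify that what is needed is the half-space condition $C'_{\beta(1)}\subset\{v:\eta(w,v)<0\}$, but your proposed way of securing it --- choosing $\eta$ so that simultaneously the $\eta$-cones lie in $\mathrm{Int}\,C'$ \emph{and} $C'$ lies in the closed future $\eta$-causal cone $R$ --- is self-contradictory: it would give $C'_{\beta(1)}\subset R_{\beta(1)}\subset \mathrm{Int}\,C'_{\beta(1)}$, forcing the sharp closed cone $C'_{\beta(1)}$ to be open in $T_{\beta(1)}M'\setminus 0$, which is impossible. Since (a) is itself a required conclusion of the lemma, you cannot trade it away. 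The correct mechanism (and the one the paper uses) exploits sharpness directly: pick a hyperplane $H$ at $p=\beta(1)$ with $H\cap C'_p=\{0\}$ and $C'_p$ on its positive side, take affine coordinates with $\p_0=\dot\beta(1)$ and $\{\tilde x^0=0\}=H$, so that $\dd\tilde x^0>0$ on $C'_p\setminus\{0\}$; then rescale the spatial coordinates by a large common factor so the cone of $\eta=-(\dd\tilde x^0)^2+\sum_i(\dd\tilde x^i)^2$ is narrow enough to sit inside $\mathrm{Int}\,C'_p$. Then $\eta(w,v)=-\dd\tilde x^0(v)<0$ on $C'$, and upper semi-continuity of $C'$ propagates both the half-space condition and the inclusion to a neighborhood. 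The point is that $\{\eta(w,\cdot)<0\}$ is a half-space, much larger than the $\eta$-causal cone, so you never need $C'$ inside the $\eta$-cone.

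Two further steps fail as written. First, in proving $\p^+M\cup\p^-M\ne\emptyset$ you claim that for a single timelike curve $\sigma$ through $q\in\p M$ one of the halves $\sigma\cap I^\pm(q,\eta)$ must meet $M$, ``for otherwise a full neighborhood of $q$ would miss $M$''; this inference is invalid, since a one-dimensional curve missing $M$ says nothing about a neighborhood (e.g.\ $\p M$ may contain a timelike segment through $q$, as for $M=\mathbb{R}^{1,1}\setminus\{(t,0):t\ge 0\}$ at $q=(1,0)$). The paper instead takes an $\eta$-geodesic between a point known to lie in $M$ and a point known not to (using openness of $I^+_\eta(y,V)\ni x$ and $x\in\p M$ to produce such a pair), so the geodesic must cross $\p M$, and then extracts a maximal timelike subsegment in $M$; note also that the endpoint obtained this way need not be $q$, so one only gets $\p^+M\cup\p^-M\ne\emptyset$, not $q\in\p^+M\cup\p^-M$. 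Second, in the construction of $\beta$ you assert that the $\eta$-diamond $I^+(\gamma(t_0),\eta)\cap I^-(q,\eta)$ lies in $M$ for $t_0$ near $1$; openness of $M$ and $\gamma(s)\to q$ do not give this (the diamond has the boundary point $q$ as a vertex and may contain points of $M'\setminus M$ arbitrarily close to $q$). This gap is harmless only because the lemma does not require $\beta(1)=q$: one should truncate the straight $\eta$-segment at its first escape point from $M$, which then serves as $\beta(1)\in\p^+M$, exactly as in the paper.
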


\begin{proof}
Let $x \in \p M$, let $R_x$ be a Lorentzian (round) cone such that $R_x \subset \textrm{Int} D_x$ where $D$ is a  $C^0$ proper cone structure $D\subset C'$ (which exists since $(M',F')$ is a {\em proper} Lorentz-Finsler space). In a   neighborhood of $x$ we can extend $R_x$ to the cone structure $R$ of a Minkowski (flat) metric $\eta$.
By continuity of $D$ the inclusion $R_z\subset \textrm{Int} D_z\subset C'_z$ holds in a neighborhood $V$ of $x$. As a consequence, $R\subset \textrm{Int} D\subset \textrm{Int} C$, cf.\ \cite[Proposition 2.6]{minguzzi17}.  At this point the proof proceeds as that of \cite[Lemma 2.17]{sbierski15}. Let $y \in I_\eta^-(x,V)$, with $y$ sufficiently close to $x$ such that $x$ and $y$ are connected by a timelike $\eta$-geodesic $\beta$ contained in $V$. If $y \in M$ then $\beta$ intersects $\p M$ because $x\notin M$ and we are done. If $y\notin M$, by the openness of $I^+_\eta(y,V)$ we can find some $x'\in I^+_\eta(y,V)\cap M$ and a timelike $\eta$-geodesic $\beta'$ connecting $y$ to $x'$, so that $\beta'$ intersects $\p M$. In both cases we have found a timelike $\eta$-geodesic intersecting both $M$ and $M'\backslash M$, thus by the openness of $M$ we can find a maximal connected timelike segment contained in $M$ with endpoint belonging to $\p M$ so that, due to the presence of the timelike segment, the endpoint really belongs to $\p^+ M$ or $\p^- M$. Hence $\p^+M\cup \p^- M\ne \emptyset$.

If $\p^+ M\ne \emptyset$, let $q\in \p^+ M$, and $U$ be a neighborhood of $q$. Further let $\gamma\colon [0,1]\to M'$ be  timelike curve with $\gamma([0,1)) \subset M$ and $\gamma(1)=q$. Since $\gamma$ is piecewise $C^1$, by shortening the domain  we can assume that $\gamma$ is $C^1$ timelike.
Since $\dot \gamma(1)$ is timelike it belongs to a $C^0$ proper cone structure $D\subset C'$.
Let $\hat R_q$ be a Lorentzian (round) cone such that $ \dot \gamma(1) \in \textrm{Int} \hat R_q \subset \hat R_q \subset \textrm{Int} D_q$. In a   neighborhood of $q$ we can extend $\hat R_q$ to the cone structure $\hat R$ of a Minkowski (flat) metric $\hat \eta$. By continuity of $D$ the inclusion $\hat R_z\subset  \textrm{Int} D_z  \subset D_z\subset C'_z$ holds in a neighborhood  $U'\subset U$  of $q$, thus $\hat R\subset \textrm{Int} C'$ on $U'$, cf.\ \cite[Proposition 2.6]{minguzzi17}. For $z\in \gamma\backslash \{q\}$ sufficiently close to $q$ we have $z\in I^{-}_{\hat \eta}(q)$. Consider the $\hat \eta$-timelike geodesic connecting $z$ to $q$. It starts from $M$ and escapes  from it to the future so a suitable restriction provides the sought curve $\beta\colon [0,1]\to M$.

Let $p=\beta(1)$ and let $\{\tilde x^\alpha\}$ (the tilde is for notational consistency with the next proof) be  affine coordinates (for the affine structure induced by $\hat \eta$) such that $\tilde x^\alpha(p)=0$ for every $\alpha$, $\dot \beta =\p/\p \tilde x^0$, $\dd \tilde x^0(v)>0$ for every $v\in C'_p$. The coordinates $\tilde x^i$, $i=1,\cdots,n$, can be rescaled with a shared factor so that the canonical metric $\eta=-(\dd \tilde x^0)^2+\sum_i (\dd \tilde x^i)^2$ has a cone $R_p\subset \textrm{Int} \hat R_p$.
By the upper semi-continuity of $C'$ the condition, $\dd \tilde x^0(v)>0$ for every $v\in C'$, is preserved in a neighborhood of $p$, and by continuity the condition $R_p \subset \textrm{Int} D_p$ extends to a neighborhood of $p$, hence $R\subset \textrm{Int} C'$. We conclude that (a)-(c) hold in a neighborhood of $p$.
\end{proof}
%
%

%
%
%

Theorem \ref{jif} is immediate from the next result.

\begin{theorem} \label{ppo}
Let $(M,F)$ and $(M',F')$ be $C^0$ proper Lorentz-Finsler spaces and let the latter be an extension of the former.
If $\p^+ M\neq \emptyset$ then for every $q\in \p^+ M$ and every $M'$-neighborhood $U\ni q$ we can find a future directed causal maximizer $\sigma \colon [0,1]\to U$, $\sigma([0,1))\subset M$,  of  finite positive length, with  endpoint  $\bar p\in \p M\cap U$.

\end{theorem}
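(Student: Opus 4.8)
The plan is to start from the previous Lemma, which hands us a future directed timelike curve $\beta\colon[0,1]\to U$ with $\beta([0,1))\subset M$, $p:=\beta(1)\in\p^+M$, a flat Minkowski metric $\eta$ near $p$ with $R\subset\textrm{Int}\,C'$, the curve $\beta$ an $\eta$-timelike geodesic, and a parallel vector $w$ with $\eta(w,v)<0$ for all $v\in C'$. Fix the affine coordinates $\{\tilde x^\alpha\}$ of that lemma, so $\tilde x^\alpha(p)=0$, $\dot\beta=\p/\p\tilde x^0$, $w=\p/\p\tilde x^0$ along the coordinate patch, and the $\eta$-cone is the standard one. I would then pick a point $\beta(s_0)=:a\in M$ close to $p$ on the curve and work inside a globally hyperbolic neighborhood of $p$ contained in $U$ (these exist in closed Lorentz-Finsler spaces by \cite[Proposition 2.10]{minguzzi17}), chosen small enough that every causal curve issuing from $a$ and remaining in it stays in the coordinate patch where $R\subset\textrm{Int}\,C'$ and $\eta(w,\cdot)<0$ on $C'$.

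Next I would produce a maximizing continuous causal curve for $(M',F')$ from $a$ to $p$: by global hyperbolicity of the chosen neighborhood and \cite[Theorem 2.55]{minguzzi17} there is a causal maximizer $\sigma$ of $(M',F')$ joining $a$ to $p$, and it has positive $F'$-length because $a\in I^-_\eta(p)\subset I^-(p)$ in $(M',F')$ (the $\eta$-cones lie in $\textrm{Int}\,C'$), so $d'(a,p)\ge\ell_\eta$-type estimate $>0$, in fact $\ell(\sigma)=d'(a,p)$ is finite since $p,a$ lie in a common globally hyperbolic set. The curve $\sigma$ is automatically a causal maximizer for $(M,F)$ on any subsegment lying in $M$, because $\varphi^*F'=F$ and $\varphi_*C=C'$ force lengths and the causal structure to agree on $M$.

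The heart of the argument — and the main obstacle — is showing that $\sigma$ actually \emph{enters} $\p M$ before reaching $p$, i.e.\ that $\sigma$ is not contained in the open set $M$ on all of $[0,1)$, and then trimming it to get the stated curve with endpoint $\bar p\in\p M\cap U$. Suppose for contradiction $\sigma([0,1))\subset M$; then $\bar p=p$ would already be in $\p M$, so really the only thing to rule out is $\sigma$ being disjoint from $\p M\cap U$ entirely, which cannot happen since $\sigma(1)=p\in\p M$. So set $\bar p$ to be the first point of $\sigma$ lying on $\p M$ (it exists: the set $\sigma^{-1}(\p M)$ is closed and contains $1$); let $t_*=\inf\sigma^{-1}(\p M)\in(0,1]$. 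On $[0,t_*)$ the curve lies in $M$, $\sigma(t_*)\in\p M\cap U$, and the restriction $\sigma|_{[0,t_*]}$ (reparametrized to $[0,1]$) is still a causal maximizer — for $(M',F')$ as a subsegment of a maximizer, hence for $(M,F)$ on the part in $M$, and upper semicontinuity of length \cite[Theorem 2.54]{minguzzi17} together with $\varphi^*F'=F$ lets one pass the maximizing property to the limit point $\sigma(t_*)$. Its length is positive (it contains the initial timelike-to-$\eta$ segment near $a$) and finite (bounded by $d'(a,p)<\infty$). The one subtlety to discharge is that $t_*>0$, i.e.\ $\sigma$ does not start on $\p M$: but $a=\sigma(0)=\beta(s_0)\in M$ with $M$ open, so indeed $t_*>0$. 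Relabeling $\sigma|_{[0,t_*]}$ as $\sigma$ and $\sigma(t_*)$ as $\bar p$ gives the asserted curve; I would close by remarking that the condition $\eta(w,\cdot)<0$ on $C'$ from the Lemma is what will later prevent $\sigma$ from being lightlike, though for this statement alone it is not yet needed.
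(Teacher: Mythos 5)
There is a genuine gap, and it sits exactly where the whole difficulty of the theorem lies: the positivity of the length of the piece of the maximizer that stays in $M$. Your construction takes the maximizer $\sigma$ from $a=\beta(s_0)$ to $p$, truncates it at the first time $t_*$ it meets $\p M$, and asserts that $\sigma|_{[0,t_*]}$ has positive length ``because it contains the initial timelike-to-$\eta$ segment near $a$.'' This is unjustified. A causal maximizer for $(M',F')$ need not be $\eta$-causal anywhere: its tangent lives in the (much wider) cone $C'$, and $F'$ vanishes on $\p C'$, so $\sigma$ can perfectly well be $C'$-lightlike (zero $F'$-length) on all of $[0,t_*]$ and accumulate its entire positive length $d'(a,p)>0$ on the portion $[t_*,1]$ that has already left $M$. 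Knowing that the \emph{total} curve has positive length tells you nothing about the truncated piece. The easy part of your argument (that $t_*>0$ exists and $\sigma([0,t_*))\subset M$) is fine, but it is not the heart of the matter.

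The paper's proof is organized precisely to close this gap. It does not aim the maximizer at $p$; instead it builds a compact pyramid $K$ with apex $p$, takes $r$ to be the point of $\p M\cap K$ minimizing the time coordinate $\tilde x^0$, and aims the maximizer at $r$. In Case 1 ($r\in\mathrm{Int}\,K$) the minimality of $\tilde x^0(r)$ forces the whole past diamond of $r$ below its time level to lie in $M$, so no truncation occurs and positivity is immediate. In Case 2 ($r\in\p K$) the maximizer may exit $M$ at some $z\ne r$, and one must show the segment from $s$ to $z$ has positive length; this is done by a quantitative argument: an auxiliary round cone $\tilde K\subset K$ with $\tilde K\setminus\{r\}\subset M$ and a wide flat metric $\eta^+$ constrain the exit point, $x^0(z)\ge -Bb/(B+\mu)$, so that if the segment in $M$ were null the remaining piece would have length at most $(1+\epsilon)\tfrac{B}{B+\mu}b$, contradicting the lower bound $(1-\epsilon)b$ on the total length for $\epsilon$ small. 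Your proposal contains no analogue of this estimate, and without it (or some replacement) the claimed positive length cannot be concluded.
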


By the Lemma we know that $\p^+ M\cup \p ^- M\ne \emptyset$, thus under extendibility either this version or its time dual apply.

\begin{proof}
By the Lemma there exists a local flat metric $\eta$ and a future directed $\eta$-geodesic $\beta\colon [0,1)\to M$ with endpoint $p \in \p^+ M$ as there mentioned. In particular, the  $\eta$-cones are contained in $\mathrm{Int} C'$, so that future directed $\eta$-timelike vectors are $C'$-timelike.
Let $\{\tilde x^\alpha\}$ be local coordinates in an arbitrarily small   coordinate neighborhood $V \ni p$ for which $\eta$ takes its canonical form with
$\dot\beta= \tilde \p_0:=\p/\p \tilde x^0$, $\tilde x^\alpha(p)=0$ for every $\alpha$, and $\dd \tilde x^0(v)>0$ for every $v\in C'$ over the neighborhood (these are the coordinates constructed at the end of the previous proof). Observe that $\tilde \p_0$ is $\eta$-timelike and hence $C'$-timelike in $V$. Observe also that the condition $\dd \tilde x^0(v)>0$ for every $v\in C'$ implies that $\tilde x^0$ is locally a time function for $C'$.
The neighborhood $V$ can be redefined to be the chronological diamond for a (flat) Minkowski metric $\hat \eta$ (not coincident with that introduced in the previous proof) with cones wider than $C'$ but such that $\dd \tilde x^0(v)>0$ for every future directed $\hat \eta$-causal vector $v$ (here the upper semi-continuity of $C'$ is used once again). Hence $V$ is $C'$-globally hyperbolic \cite[Proposition 2.10]{minguzzi17}.

In these coordinates $p=(0,\bm{0})$. Let $o=(-\delta, \bm{0})$, $\delta>0$, so that $o\in \beta\cap M$. Let us consider a cylinder of radius $\rho$ and height  $2\rho$ centered at $o$, so small that it is contained in $M$. Let us introduce the compact pyramid  $K$ obtained as the convex hull  of $p$ and the horizontal section of the cylinder passing through $o$, i.e.\
\[
K=\textrm{conv}\left\{ (0,\bm{0})\cup \{-\delta\}\times \bar B^n(\bm{0},\rho) \right\}.
\]
Since $\p M\cap K$ is compact  the function $\tilde x^0|_{\p M\cap K}$ reaches a minimum $T>-\delta+\rho$ at a point $r=(T,\bm{x})\in \p M\cap K$.

\begin{figure}[ht]
\begin{center}
\includegraphics[width=9cm]{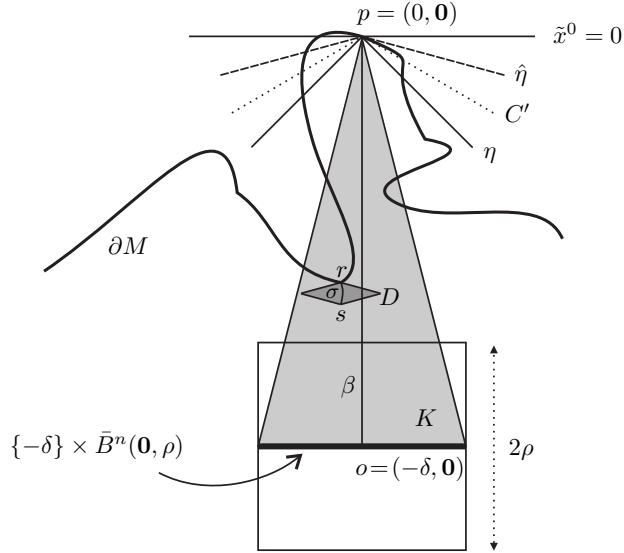}
\end{center}
\caption{The construction of $\sigma$ in case 1. } \label{ca1}
\end{figure}

We consider two cases:\\

Case 1, see Fig.\ \ref{ca1}. If $r \in \textrm{Int} K$ we can find a small $\hat \eta$-causal chronological diamond $D$ with lower vertex $s=(T',\bm{x})$, $-\delta<T'<T$, and upper vertex at $r$ which is entirely contained in $K$ (notice that $s$ and $r$ are connected by a $C'$-timelike curve $\nu:t\mapsto (t,\bm{x})$, $t\in [T',T]$, which is $C'$-timelike, because $\p_0$ is $\eta$-timelike, see the first paragraph of this proof). Notice that since $\tilde x^0$ is a local time function for $\hat \eta$ and $C'$ we have  $D\backslash \{ r\}\subset M$, so that
\[
J_{C'}^+(s,V)\cap J_{C'}^-(r,V)\backslash \{r\}\subset M.
\]
 Since $V$ is $C'$-globally hyperbolic there is a maximizing continuous causal curve  \cite[Theorem 2.55]{minguzzi17}  $\sigma\colon [0,1]\to M'$, $\ell(\sigma)=d_{V}(s,r)$, of positive length between $s$ and $r$, so necessarily contained in $M$ except for the future endpoint $r$. Thus the length of the restriction $\sigma\vert_{ [0,1)}$ is positive. The point $r$ is the point $\bar p$ in the statement of the theorem.\\

Case 2, see Fig.\ \ref{ca2}. If $r\in \p K$ we introduce new local coordinates $\{ x^\alpha\}$ which are  affinely related to $\{\tilde x^\alpha\}$ and chosen in the following way: $x^\alpha(r)=0$ for every $\alpha$, $\p_0\propto \tilde{\p}_0$, $F_r(\p_0)=1$, and the locus $Q_r=\{ v\in T_rV \cap C': \dd x^0(v)=1\}$ is a subset of a support hyperplane at $\p_0$ for the convex set $F_r\ge 1$. From now on the Cartesian notation will refer to these coordinates. The important point is that the set $K$ is still a cone when seen in the new coordinates. Though no more isotropic it still contains $\beta$ so if the cone is regarded as the subset of a vector space we get that $-\p_0$ is still a direction, so to say, belonging to the interior of the cone. We find a round  (in the new coordinates) cone $\tilde K\subset K$ with vertex $r$ containing the direction $-\p_0$ in its interior, such that $\tilde K\backslash \{r\}\subset M$.  Let $\mu>0$ be the tangent of the angle formed by $\p\tilde K$ with respect to the vertical line generated by $\p_0$.

\begin{figure}[ht]
\begin{center}
 \includegraphics[width=8cm]{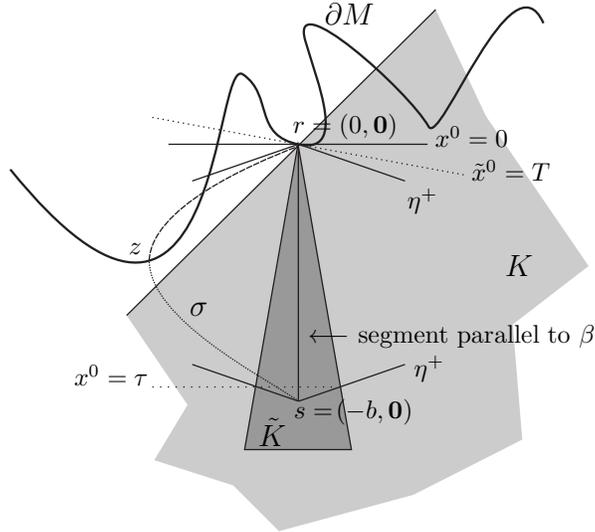}
\end{center}
\caption{The construction of $\sigma$ in case 2. } \label{ca2}
\end{figure}

Let us define the flat metric
$$\eta^+=-B^2 (\dd x^0)^2+\sum_i (\dd x^i)^2$$
where $B>0$.
The constant $B$ is chosen so large that $C'_r$ is contained in the timelike cone of $\eta^+$ at $r$. By the upper semi-continuity of $C'$ the inclusion is preserved in a neighborhood of $r$.

    We can find a neighborhood $W$ of $r$ which is a chronological diamond for $\eta^+$, hence $C'$-globally hyperbolic.
    From now one we shall parametrize all the continuous causal curves in a neighborhood of $r$ with respect to $x^0$, so we are interested in the locus $Q=\{ v\in C'\cap TW\colon \dd x^0(v)=1\}$. Over $Q_r$ we have $F\le 1$ (cf.\ the mentioned support hyperplane), thus by upper semi-continuity of $C^\times$ (i.e.\ of $F$) for any chosen $0<\epsilon<1$ we can redefine $W(\epsilon)$ to be so small that $F<1+\epsilon$ over $Q$. Similarly, we have $F_r(\p_0)=1$ thus by the lower semi-continuity of $C^\times$ (i.e.\ of $F$) we can redefine $W$ so that  $F(\p_0)>1-\epsilon>0$ in $W$.

    Let $s=(-b, \bm{0})$ with $b>0$ so small that $s\in \textrm{Int} K\cap \textrm{Int} \tilde K \cap W$.
    Let us consider a maximizing continuous causal curve $\gamma$, $\ell(\gamma)=d_{W}(s,r)$, connecting $s$ to $r$ in the globally hyperbolic spacetime $(W,C')$. By the same argument used in the previous case $r\in I^+_{C'}(s,W)$ so the causal maximizer has positive length. In fact, since $F(\p_0)>1-\epsilon$ we have $\ell(\gamma)\ge (1-\epsilon) b$ where the right-hand side is a lower bound for the length of the coordinate-straight segment connecting the points. Let us consider the first point of escape $z$ of $\gamma$ from $M$, and let us cut $\gamma$ into two pieces, the curve $\sigma\colon [-b, -a) \to M$ starting from $s$ and with endpoint $z$ and the curve $\lambda\colon [-a, 0]\to M'$ connecting $z$ to $r$. If the latter were degenerate, i.e.\ $z=r$, we  would be finished since the former would have necessarily  positive length, so we shall assume $z\ne r$. Of course both curves are maximizing and our goal is to show that $\sigma$ has positive length (hence $\bar p=z$). We suppose not and we show that then $\ell(\gamma)$ has an upper bound which contradicts the previous lower bound.

     By construction $\tilde K\backslash \{r\} \subset M$ which implies that $z\notin \tilde K$. Moreover, since the cones $C'$ are contained in those of $\eta^+$, we shall have $x^0(z)\ge \tau$ where $\tau$ is determined by the equation $B(\tau-(-b))=\mu(0-\tau)$,  that is
    \[
    x^0(z)\ge -\frac{Bb}{B+\mu}  ,
    \]
     for the curve $\sigma$ is ``pushed forward to the future'' at least till it escapes $\tilde K$. The length of $\gamma$ satisfies
     \[
     \ell(\gamma)=\ell(\lambda)=\int_{x^0(z)}^0 F(v) \dd t\le (1+\epsilon) (0-x^0(z))\le (1+\epsilon) \frac{B}{B+\mu} b,
     \]
    where we used the fact that since $\lambda$ is parametrized with $x^0$, we have that the tangent vector belongs to $Q$ almost everywhere. We arrive at
    \[
    (1-\epsilon)b\le \ell(\gamma)\le (1+\epsilon) \frac{B}{B+\mu} b .
    \]
    Notice that $B$ and $\mu$ have been defined before the introduction of $\epsilon$. In other words these constants are independent of $\epsilon$. By taking $\epsilon$ (and hence $W$) sufficiently small we get a contradiction.
\end{proof}

An immediate consequence is the next generalization of \cite[Theorem 3.6]{galloway18b}

\begin{theorem}
Let $(M,F)$ be a $C^0$ proper Lorentz-Finsler space which is future asymptotically timelike geodesically complete and is extendible as a $C^0$ proper Lorentz-Finsler space, then $\p^+M=\emptyset$ and $\p^-M\ne \emptyset$.
\end{theorem}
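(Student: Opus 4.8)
The plan is to rule out $\partial^+ M \neq \emptyset$ by producing, from such a boundary point, a causal maximizer inside $M$ of \emph{finite positive} length that escapes every compact set, in direct contradiction with the hypothesis that $(M,F)$ is future asymptotically timelike geodesically complete.

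First, since $(M,F)$ is extendible, the Lemma gives $\partial^+ M \cup \partial^- M \neq \emptyset$. Assume for contradiction that $\partial^+ M \neq \emptyset$, pick $q \in \partial^+ M$ and any $M'$-neighborhood $U \ni q$, and apply Theorem \ref{ppo}: it furnishes a future directed causal maximizer $\sigma\colon [0,1] \to U$ with $\sigma([0,1)) \subset M$, of finite positive length, whose endpoint $\bar p = \sigma(1)$ lies in $\partial M \cap U$.

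Next, I would pass to $\gamma := \sigma|_{[0,1)}\colon [0,1) \to M$, which is again a causal maximizer, and observe that $\ell(\gamma) = \ell(\sigma)$ since the integral $\int_0^1 F(\dot\sigma)\,\mathrm{d}t$ is unaffected by deleting the single parameter value $t = 1$; hence $\ell(\gamma)$ is finite and positive, so $\gamma$ meets condition (b) of the completeness definition. For condition (a), note that if $\gamma([0,1))$ were contained in some compact $K \subset M$ then, $K$ being closed in the Hausdorff space $M'$, we would get $\bar p = \lim_{t \to 1^-} \gamma(t) \in K \subset M$, contradicting $\bar p \in \partial M$; thus $\gamma$ escapes every compact subset of $M$. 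Future asymptotic timelike geodesic completeness then forces $\ell(\gamma) = +\infty$, contradicting its finiteness. Therefore $\partial^+ M = \emptyset$, and the Lemma now yields $\partial^- M \neq \emptyset$, which is the claim.

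There is essentially no obstacle here — the statement is a short corollary of Theorem \ref{ppo} together with the definitions — and the only point deserving a line of care is the verification of condition (a): that deleting the boundary endpoint $\bar p$ leaves a curve which ``escapes every compact set'' in the precise sense of the completeness definition, which is exactly what the closedness of compact sets in $M'$ supplies. Note also that only the future version of Theorem \ref{ppo} is needed here, since we merely have to exclude $\partial^+ M \neq \emptyset$; the surviving alternative $\partial^- M \neq \emptyset$ then comes for free from the Lemma.
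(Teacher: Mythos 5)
Your proof is correct and follows exactly the route the paper intends: Theorem \ref{ppo} is applied to a hypothetical point of $\p^+M$ to produce a causal maximizer of finite positive length whose endpoint lies on $\p M$, which (after restriction to $[0,1)$) violates future asymptotic timelike geodesic completeness, and the Lemma then forces $\p^-M\ne\emptyset$. The only micro-adjustment worth making is in verifying condition (a): if $\gamma$ failed to escape some compact $K\subset M$ one should take a sequence $t_n\to 1^-$ with $\gamma(t_n)\in K$ and use $\gamma(t_n)\to\bar p$ together with the closedness of $K$ in $M'$ to conclude $\bar p\in K\subset M$, a contradiction — which is the same one-line argument you give.
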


We have also the next generalization of \cite[Corollary 3.7]{galloway18b} and \cite[Theorem 2.6]{galloway17b}. It has the same proof thanks to \cite[Theorem 2.19]{minguzzi17}.

\begin{theorem}
Let $(M,F)$ be a $C^0$ proper Lorentz-Finsler space extendible as a $C^0$ proper Lorentz-Finsler space. If $\p^+M=\emptyset$, then  $\p^-M$ is an achronal topological hypersurface (hence a locally Lipschitz graph).
\end{theorem}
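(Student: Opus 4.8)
The plan is to follow the classical argument that the past boundary of an extension, when it carries no future boundary, is an achronal topological hypersurface, adapting each step from the Lorentzian setting to the closed/proper Lorentz-Finsler setting where the needed causality tools are supplied by \cite{minguzzi17}. First I would establish achronality of $\p^- M$ in $M'$. Suppose $q_1, q_2 \in \p^- M$ with $q_1 \ll q_2$ (chronological order in $M'$). By definition of $\p^- M$ there is a past directed timelike curve from $q_2$ into $M$, i.e. a future directed timelike curve $\gamma$ with $\gamma([0,1)) \subset M$ and $\gamma(1) = q_2$. Concatenating a timelike curve from $q_1$ to $q_2$ with $\gamma$ reversed and using openness of $I$ in $M'$, one produces a future directed timelike curve from a point of $\p^- M$ (near $q_1$) that enters $M$ and then exits it again — equivalently, by the same cut-and-restrict procedure used in the proof of the Lemma, one manufactures a point of $\p^+ M$, contradicting $\p^+ M = \emptyset$. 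The one delicate point here is orienting the curves and invoking openness of $I$ correctly; the Lemma's proof already contains exactly this style of reasoning (cut a timelike segment crossing $M$ and $M' \setminus M$, take the maximal connected piece in $M$), so I would cite it.

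Next I would show $\p^- M$ is a topological hypersurface, the heart being that it is locally an achronal edgeless set and hence an embedded $C^0$ hypersurface. The standard mechanism: near any $q \in \p^- M$ choose, exactly as in the Lemma and in the proof of Theorem \ref{ppo}, a local flat Minkowski metric $\hat\eta$ on a neighborhood $U$ with cones wider than $C'$, so that $\hat\eta$-causality dominates $C'$-causality, together with coordinates in which some $\tilde x^0$ is a time function for both $\hat\eta$ and $C'$. Then $\p^- M$, being achronal for $C'$, is a fortiori... not quite — one needs achronality with respect to the \emph{wider} $\hat\eta$ cones, which is where the actual content lies: I would argue that if $\p^- M$ contained two $\hat\eta$-chronologically related points then, because the $\hat\eta$ cones are arbitrarily close to $C'$ as $U$ shrinks and because $\p^- M$ near $q$ is "produced" by timelike curves emanating into $M$ from a whole neighborhood (openness again), one still gets a genuine $C'$-timelike violation of achronality, or else directly a point of $\p^+ M$. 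Granting local $\hat\eta$-achronality, the classical result (the one cited as \cite[Theorem 2.19]{minguzzi17}, applicable since $\hat\eta$ is a smooth spacetime hence a closed cone structure) says an achronal set with empty edge is a locally Lipschitz graph over the $\{\tilde x^0 = 0\}$ hyperplane, hence a topological hypersurface; edgelessness comes from $\p^- M$ being a (relative) boundary, so every neighborhood of $q$ meets both $I^+_{\hat\eta}(q)$ through $M'\setminus \overline{M}$-side points and $I^-_{\hat\eta}(q)$ through $M$-side points.

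Then the Lipschitz-graph conclusion is immediate: an achronal set without edge in a smooth spacetime is, by \cite[Theorem 2.19]{minguzzi17}, realized near each point as the graph of a locally Lipschitz function over a spacelike hyperplane of the auxiliary flat metric, and a locally Lipschitz graph is in particular a topological hypersurface. I expect the main obstacle to be the second step — precisely, upgrading $C'$-achronality of $\p^- M$ to $\hat\eta$-achronality (or otherwise avoiding the need for it), since $\p^- M$ need not be closed and the set is only characterized through the timelike curves defining it; the clean way around this is to note that any $\hat\eta$-chronological relation between two points of $\p^- M$, together with the timelike curve from the later point into $M$ and openness of $I_{C'}$, reproduces a future boundary point, so $\hat\eta$-achronality of $\p^- M$ follows from $\p^+ M = \emptyset$ by the very argument of the Lemma applied with the wider metric. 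Since the excerpt explicitly says this theorem "has the same proof" as \cite[Corollary 3.7]{galloway18b} and \cite[Theorem 2.6]{galloway17b} thanks to \cite[Theorem 2.19]{minguzzi17}, I would in practice present the argument as that Lorentzian proof verbatim, pointing out at each use of a causality-theoretic fact the corresponding Lorentz-Finsler replacement from \cite{minguzzi17}, and flagging the achronality-upgrade as the only place requiring genuine (though routine) re-examination.
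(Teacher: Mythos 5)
Your overall plan is the paper's plan: the paper gives no independent argument, merely importing the Lorentzian proof of Galloway--Ling with \cite[Theorem 2.19]{minguzzi17} supplying the ``edgeless achronal sets are locally Lipschitz hypersurfaces'' step, and that is what you propose to do. The achronality mechanism you describe (use the defining timelike curves plus openness of $I$ to manufacture a timelike curve that enters $M$ and then leaves it, whose exit point lies in $\p^+M=\emptyset$) is the right one. But two things in your reconstruction are genuinely off. First, you have the orientation of $\p^-M$ reversed: a point $q\in\p^-M$ is the endpoint of a \emph{past} directed timelike curve with $\gamma([0,1))\subset M$, i.e.\ there is a \emph{future} directed timelike curve starting at $q$ with $\gamma((0,1])\subset M$; the curve you write down ($\gamma(1)=q_2$, $\gamma([0,1))\subset M$, future directed) certifies $q_2\in\p^+M$. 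This flip propagates: at the end you place $I^+(q)$ on the $M'\setminus\overline{M}$ side and $I^-(q)$ on the $M$ side, which is again the $\p^+M$ picture. The clean statement you actually need, and which drives both achronality and edgelessness simultaneously, is: for $p\in\p^-M$ and $U$ small, $I^+_{C'}(p,U)\subset M$ and $I^-_{C'}(p,U)\cap M=\emptyset$ (both proved by the exit-point argument from $\p^+M=\emptyset$). Achronality is then immediate, and any timelike curve from $I^-(p,U)$ to $I^+(p,U)$ passes from $M'\setminus M$ into $M$, and the past endpoint of its final segment in $M$ lies in $\p^-M$, giving edgelessness.

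Second, the difficulty you single out as ``the heart'' --- upgrading $C'$-achronality to achronality for a flat metric $\hat\eta$ with cones \emph{wider} than $C'$ --- is spurious, and your proposed resolution would not work: the $\hat\eta$-cones are chosen strictly wider and do not shrink to $C'$ as $U$ shrinks, so an $\hat\eta$-chronological relation between boundary points cannot in general be promoted to a $C'$-chronological one. No such upgrade is needed. The graph/Lipschitz argument behind \cite[Theorem 2.19]{minguzzi17} uses an auxiliary round cone $R$ with $R\subset\textrm{Int}\,C'$ (\emph{narrower}, exactly as in the Lemma of Section 2): the integral curves of $\p_0$ are then $C'$-timelike, so they meet the achronal set at most once; edgelessness forces them to meet it at least once; and $I_R\subset I_{C'}$ means $C'$-achronality automatically gives $R$-achronality, which is what bounds the slope of the graph. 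With the orientation corrected and the wide/narrow roles of the auxiliary metric swapped, your outline becomes the intended proof.
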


\begin{remark}
Theorem \ref{ppo} immediately implies that there are no $C^0$ extensions at the boundary $\mathscr{I}^+$ of Minkowski spacetime, Schwarzschild's spacetime, or similarly  asymptotically flat $C^2$ spacetimes for which the timelike geodesics converge to $i^+$ (with no need to impose future timelike geodesic completeness). In fact, in this case the timelike geodesics escape the neighborhood $U$ without intersecting $\p M$. Of course, the proof of inextendibility at the spacelike boundary requires a different study \cite{sbierski15}.
\end{remark}

\section{Causal character of causal maximizers} \label{nhd}

The objective of this section is to prove the next theorem

\begin{theorem} \label{jol}
Let $x \mapsto C_x\subset T_xM\backslash 0$ be a continuous distribution of sharp convex closed cones with non-empty interior (continuous proper cone structure).
Let $C=\cup_x C_x\subset TM\backslash 0$ and let $F\colon C\to [0,\infty)$ be a continuous function which on every cone $C_x$ is positive homogeneous of degree one, concave, non identically zero and such that $F_x(\p C_x)=0$. Suppose there is a $C^1$ strictly convex function $f\colon [0,+\infty)\to [0,+\infty)$, $f(0)=0$, $f'(0)=0$,  such that   the function
$f(F)$ can be locally extended\footnote{The extended function is denoted in the same way. Whitney's Theorem \cite{whitney34} and similar results for Lipschitz functions could likely be used to place conditions on the function $f(F)$ just over $C$.} to a neighborhood of $C$ so as to have  the next properties:
\begin{itemize}
\item[(a)] it is locally Lipschitz in $x$ for bounded velocity cf.\ Eq.\ (\ref{fox}),
\item[(b)] the vertical differential map $(x,v)\mapsto \dd f(F_x)(v)(\cdot)$ is continuous and \\ $\dd f(F_x)\ne 0$ on $\p C_x$  (thus $\p C_x$ is $C^1$).
\end{itemize}
Then the maximizers of  the length functional $\ell(\gamma)=\int_\gamma F(\dot \gamma)\dd t$ are either timelike or lightlike, namely the tangent is timelike almost everywhere or lightlike almost everywhere.
\end{theorem}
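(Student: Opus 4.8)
The plan is to reduce the dichotomy to an energy conservation law. Observe first that the assertion is equivalent to saying that along any causal maximizer $\gamma$ the function $t\mapsto F(\dot\gamma(t))$ is almost everywhere constant: since $F^{-1}(0)=\p C$ while $F>0$ exactly on $\mathrm{Int}\,C$, a vanishing constant means $\dot\gamma\in\p C$ a.e.\ (lightlike) and a positive constant means $\dot\gamma\in\mathrm{Int}\,C$ a.e.\ (timelike). Because the maximizing property is local and every subsegment of a maximizer is again a maximizer, it suffices to prove this on a small coordinate ball $U$ about each point of $\gamma$ — on which we may use the locally extended $f(F)$ with properties (a)--(b) — since on two overlapping such balls the resulting constants must agree (overlaps carry positive measure), so that by connectedness of the domain the constant is global.

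The central object is the autonomous Lagrangian $L:=f(F)$. By (b) it is $C^1$ in the velocity, $\dd_v L=\dd f(F_x)$ is continuous and nonvanishing on $\p C_x$ (so the lightcone $\{L=0\}$ is a $C^1$ hypersurface); by (a) it is locally Lipschitz in $x$ for bounded velocity. The first step would be to show that, suitably reparametrized, $\gamma$ is an extremal of $\int L(\dot\sigma)\,\dd t$: its momentum $t\mapsto\dd_v L(\gamma,\dot\gamma)$ is absolutely continuous with $\tfrac{\dd}{\dd t}\dd_v L=\p_x L$ a.e., and in particular $\gamma$ is $C^1$. On the part of $\gamma$ that is timelike this is the usual first-variation argument (strict convexity of $f$ together with $f'(0)=0$ gives $f'(F)>0$ there, which turns a critical curve of $\int F$ into a critical curve of $\int L$), with the local global hyperbolicity granted by properness and the existence of maximizers in small neighborhoods supplying the variational framework. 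The hard part will be to carry this through at lightlike pieces and, crucially, across a possible change of causal character, i.e.\ to obtain the Euler--Lagrange equation along the \emph{entire} curve rather than only on its timelike part; this is where the low-regularity calculus of variations for cone structures under (a)--(b) is needed, and it is exactly what will end up forbidding a maximizer from being mixed.

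Granting this, the conclusion is immediate. Since $L$ is autonomous and $C^1$ in the velocity, the Du Bois--Reymond energy identity for Lipschitz extremals yields that
\[
H:=\dd_v L(\gamma,\dot\gamma)\cdot\dot\gamma-L(\gamma,\dot\gamma)
\]
is constant along $\gamma$ (this identity involves only $\dd_v L$ and $L$, not $\p_x L$). Now $F$ is positively homogeneous of degree one on each cone, so Euler's relation $\dd_v F_x\cdot v=F(v)$ gives $\dd_v L\cdot v=f'(F)\,F$, whence $H=g\bigl(F(\dot\gamma)\bigr)$ with $g(r):=r\,f'(r)-f(r)$, the identity extending continuously to the lightcone where both sides vanish. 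From $f(0)=0$, $f'(0)=0$ and the strict convexity of $f$ one checks that $g(0)=0$ and $g$ is strictly increasing on $[0,+\infty)$ — indeed $g=f^{\ast}\!\circ f'$ with $f^{\ast}$ the Legendre transform of $f$, so $g$ is strictly increasing wherever $f'>0$, i.e.\ on $(0,+\infty)$, and $g(0)=f^{\ast}(0)=-\inf f=0$. Hence $F(\dot\gamma)$ is a.e.\ equal to some constant $c\ge 0$: if $c=0$ then $\dot\gamma\in\p C$ a.e.\ and $\gamma$ is lightlike a.e.; if $c>0$ then $\dot\gamma\in\mathrm{Int}\,C$ a.e.\ and $\gamma$ is timelike a.e.
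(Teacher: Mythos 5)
Your proposal does not contain a proof: the step you label ``the hard part'' --- establishing the Euler--Lagrange equation for $L=f(F)$ along the \emph{entire} maximizer, including its lightlike portion and across any change of causal character --- is exactly where all of the content of the theorem lives, and you leave it entirely unaddressed. Everything after ``Granting this'' is routine, but what you are granting is strictly stronger than the theorem itself: constancy of the energy $H=g(F(\dot\gamma))$ amounts to the existence of an affine parameter along the maximizer, and the paper explicitly states that at this regularity (only local Lipschitz continuity of $f(F)$ in $x$) no such notion is available. Concretely, the Du Bois--Reymond identity presupposes that $\gamma$ is an extremal of $\int L\,\dd t$ in an integrated sense; but $\gamma$ is only a maximizer of $\int F\,\dd t$, and the passage from criticality of $\int F$ to criticality of $\int f(F)$ uses $f'(F)>0$, which fails precisely on the lightlike set where $f'(0)=0$ --- i.e.\ exactly where the argument is needed. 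Moreover $\p_x L$ exists only almost everywhere in $x$ (not along the curve), so even formulating the Euler--Lagrange equation requires care that condition (a) alone does not supply.

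The paper's proof takes a completely different and more elementary route that sidesteps the variational machinery: assuming the tangent is lightlike on a set of positive measure $N$ but not almost everywhere, it perturbs the curve by $\Gamma_1=\gamma_1+\epsilon h\,\p_0$ with $h=re^{\sigma t}$. Condition (b) gives a uniform lower bound $\dd f(F_x)(v)(\p_0)\ge\delta>0$ on the sliced cone, so the velocity shift raises $f(F)$ by at least $\epsilon\dot h\delta$, while condition (a) bounds the loss from shifting the base point by $k h\epsilon$; choosing $\sigma$ large makes the net gain in $f(F)$ at least $\epsilon re^{\sigma a}$. Since $f'(0)=0$, the inverse $f^{-1}$ is superlinear at $0$, so on $N$ the length gain is $f^{-1}(\epsilon re^{\sigma a})\mu(N)$, which dominates the $O(\epsilon)$ cost of reconnecting the endpoints with timelike segments, contradicting maximality. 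If you want to salvage your strategy you would have to first prove the dichotomy (or at least the Euler--Lagrange equation on lightlike pieces) by some such direct comparison argument --- at which point the energy identity is no longer needed.
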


The Lipschitz condition  takes care of the positive homogeneity of $F$ in the fiber direction, the reader is referred to the inequality  (\ref{fox}) for a clarification.

The proof really shows that in the latter case the tangent cannot be  timelike anywhere. This fact follows, more generally, from a result proved in \cite[Theorem 18]{minguzzi17}.

A typical $f$ could be $f(x)=x^\rho/\rho$ for $\rho>1$. General relativity corresponds to $\rho=2$  with $F_x(v)=\sqrt{-g_x(v,v)}$. For a different example with $\rho>1$, consider for instance $F_x=\{(\dd x^0)^\rho-\sum_i (\dd x^i)^\rho \}^{1/\rho}$.

The proof shows that the differential $\dd L_x(v)(\cdot)$, $L:=-f(F)$, does not vanish anywhere on $C$. This quantity is interpreted as the momenta of the particle of velocity $v$ \cite[Section 3.1]{minguzzi17}, so in $(b)$ we are asking that this correspondence velocity-momenta be continuous and well defined even for massless particles (i.e.\ every massless particle has finite and non-vanishing momenta). In fact, as established in \cite{minguzzi17}, the function $f$ determines the correspondence velocity-momenta by regularizing the Finsler Lagrangian $L$ at the boundary of the cone.

The first paragraphs and the general strategy are the same of Graf and Ling's proof, however, we need to work out the estimates entering it in a different way since we cannot appeal anymore to the Lorentzian Lemma by Chrusciel and Grant \cite[Lemma 1.15]{chrusciel12}.
We kept the same notation of Graf and Ling, this way the Finslerian proof should be easier to follow by people already acquainted with the Lorentzian one.

\begin{proof}
Let  $\gamma\colon I \to M$ be a maximizing future directed causal curve from $p$ to $q$. By compactness we may cover $I$ by finitely many open intervals $I_k$ (half open intervals on the endpoints of $I$) such that each $\gamma_{I_k}$ is contained in a relatively compact chart domain $(U_k, \varphi_k)$ on which $F(\p_0^{\varphi_k})>c_k>0$ and the causal cone $C$ is contained in the open  cone of a flat Minkowski metric $\eta_k$ on $U_k$. The time function $t:=x^0$ of the flat Minkowski metric in standard coordinates is a strictly increasing function in the curve parameter. On $U_k$ we can reparametrize the curve $\gamma$ using $t$ as parameter, so that $\dd t(\dot \gamma)=\dot \gamma^0=1$ almost everywhere. We want to show that it is possible to pass to a refinement covering, denoted in the same way, such that on $U_k$ we can find a translationally invariant (in the affine structure induced by the coordinates) cone distribution $x\mapsto \tilde C_x$, such that $C_x \subset \textrm{Int} \tilde C_x$ for every $x\in U_k$, the extension of $f(F)$ mentioned in the statement of the theorem is well defined on $\tilde C$, and  there are $\delta,u_0>0$ such that for $0\le u<u_0$ we have
\begin{align} \label{dte}
f(F(v+u\p_0))\ge f(F(v))+ u \delta
\end{align}
for every $v\in \tilde D:=\{v\in T\bar{U}_k\cap \tilde C\colon \dd t(v)=1\}$. This shrinking result is accomplished as follows: we need only to prove that for $x\in \gamma(I_k)$ we have  $\dd f(F_x)(v)(\p_0)>0$ at every point of the compact set $D_x=\{v\in  C_x: \dd t(v)=1\}$. By continuity of $(x,v)\mapsto \dd f(F_x)(v)(\p_0)$ the same inequality holds in a tangent bundle neighborhood of $D_x$, so that the cone $\tilde C$ can be found.

For any $v\in D_x\cap\textrm{Int} C_x$
\begin{align*}
f(F_x(v))+&uF_x(\p_0)f'(F_x(v)) \le f(F_x(v)+ uF_x(\p_0))\\
&\le f(F_x(v+u\p_0))=f(F_x(v))+\dd f(F_x)(v)(\p_0) u+ o(u),
\end{align*}
where the first inequality follows from the convexity of $f$ and the second inequality is the reverse triangle inequality (which follows from concavity and positive homogeneity \cite[Proposition 3.4]{minguzzi17}).
Thus simplifying $f(F_x(v))$, dividing by $u$, and letting $u\to 0$, we get $ \dd f(F_x)(v)(\p_0)\ge  F_x(\p_0) f'(F_x(v)) >0$, which proves positivity of $\dd f(F_x)(v)(\p_0)$ at least inside the  cone. By continuity $\dd f(F_x)(v)(\p_0)\ge 0$ for $v\in \p C_x$. However, we assume $F_x=0$ on $\p C_x$, thus $\ker \dd f(F_x)(v)$, for $v\in \p C_x$, has dimension at least $n$ as it includes $T\p C_x$, while it cannot have dimension $n+1$, for otherwise the differential would vanish on the boundary of the cone. Since $\p_0\notin T \p C_x$ we have $\dd f(F_x)(v)(\p_0)\ne 0$ and hence $\dd f(F_x)(v)(\p_0)> 0$.
The inequality (\ref{dte}) is proved. Notice that the inequality $\dd f(F_x)(v)(\p_0)\ne 0$ for any $v\in \p C_x$ also implies that $\tilde C$ can be taken so small that $F(v)<0$ iff $v\notin C$.



Let $N_I := \{ s\in I\colon \dot\gamma(s) \textrm{ exists and is lightlike} \}$. Let $\mu$ be the Lebesgue measure of the real line. By contradiction, suppose that $0<\mu(N_I)<\mu(I)$, namely the maximizing curve is neither timelike almost everywhere nor lightlike almost everywhere.
Below we will construct another causal curve from $p$ to $q$ which is longer than $\gamma$, and hence contradicting the fact that $\gamma$  is a maximizer.

First let us show that for  at least one of the $I_k$, we have $0 < \mu(N_{I_k} ) < \mu(I_k)$ (i.e., $\gamma_{I_k}$ is causal but neither timelike nor null). Assume for the moment that $\mu (N_{I_j}) = 0$ for some $j$. Then, since the intersection of the neighbouring intervals $I_{j-1}$ and $I_{j+1}$ with $I_j$ must be
non-empty and open, either one of those has the desired property or $\mu(N_{I_{j-1}}) = \mu(N_{I_{j+1}}) = 0$.  The existence of a suitable $I_k$ now follows by induction and noting that $\mu(N_I) \ne 0$. If instead $\mu(N_{I_j}) = \mu(I_j)$, one proceeds the same way, using $\mu (N_I ) \ne \mu(I)$ in the end.

This shows that we may assume  that $\gamma(I)$ is contained in a chart domain $(U, \varphi)$. By reparametrizing  we may further assume that $\dot \gamma(0)$  exists and is timelike.


To sum up, we need only to consider the case $M = \mathbb{R}^{n+1}$. The cone $C$ contains in its interior $\p_0$ and is contained in the Minkowski standard coordinate cone,  $\gamma \colon  [a, b] \to U \subset \mathbb{R}^{n+1}$, $\gamma(0)=0$, $\gamma$ is differentiable at $0$ and timelike, i.e.\ $F(\dot \gamma(0))>0$, $\mu(N_{[a,0]}) > 0$ (if instead $\mu(N_{[0,b]}) > 0$ one just needs to reverse the time orientation). Moreover, there is a translationally invariant cone distribution $\tilde C$ containing $C$ in its interior, such that the inequality (\ref{dte}) holds true.

Let $\gamma_1:=\gamma\vert_{[a,0]}$; given a $C^1$, hence Lipschitz function $h\colon [a,0]\to (0,\infty)$ we define a new curve $\Gamma_1^\mu=\gamma_1^\mu+\epsilon h T^\mu$, where $T^\mu$ is defined via $\p_0=T^\mu \p_\mu$ (i.e.\ $T^0=1$ and $T^i=0$, for $i\ne 0$). We want to show that there is  a suitable $h$ such that for sufficiently small $\epsilon$  we have $\Gamma_1\subset U$,  $\Gamma_1(0)=(\epsilon h(0), 0,\ldots,0)$, and  $F_{\Gamma_1}(\dot \Gamma_1) > F_\gamma(\dot \gamma)$ a.e.\ on $I$ (which will also imply that $\Gamma_1$ has causal tangent almost everywhere). Notice that $\Gamma_1^\mu(a)=\gamma^\mu_1(a)+\epsilon h(a) T^\mu$, with $h(a)>0$, thus it will be possible to join $\gamma^\mu_1(a)$ to $\Gamma^\mu_1(a)$ with a curve with tangent $\p_0$ hence timelike. This curve is denoted $\Gamma_0$ and has positive length, $L(\Gamma_0)> 0$.


It is convenient to write
\begin{align*}
f(F_{\Gamma_1}(\dot \Gamma_1))\! - \!f(F_{\gamma_1}(\dot \gamma_1))=[f(F_{\Gamma_1}(\dot \Gamma_1)) \!- \! f(F_{\Gamma_1}(\dot \gamma_1))]+[f(F_{\Gamma_1}(\dot \gamma_1))\!-\! f(F_{\gamma_1}(\dot \gamma_1))] .
\end{align*}
Notice that $\dot \gamma_1(t)$ might not belong to $C_{\Gamma_1(t)}$, but it certainly belongs to $\tilde C_{\Gamma_1(t)}$, due to the translational invariance of this set.
By Eq.\ (\ref{dte}) we have for sufficiently small $\epsilon$
\[
f(F_{\Gamma_1}(\dot \Gamma_1)) \ge f(F_{\Gamma_1}(\dot \gamma_1))+\epsilon \dot h \delta.
\]
Moreover, by hypothesis $f(F)$ is locally Lipschitz, namely using the  coordinate Euclidean norm, for any $K>0$, we can find $k(K)>0$ such that  for every  $v$ in the extended domain of $F$ (namely $\tilde C$), $\vert v\vert\le K$,
\begin{equation} \label{fox}
\vert f(F_{y}(v))-f(F_x(v))\vert \le k \vert y-x \vert.
\end{equation}
Thus since any vector belonging to the compact set $\tilde D$ has norm bounded by some constant $K>0$ we can  find $k$ such that
\begin{equation} \label{fot}
\vert f(F_{\Gamma_1}(\dot \gamma_1))-f(F_{\gamma_1}(\dot \gamma_1))\vert \le k \vert\Gamma_1-\gamma_1 \vert = k h \epsilon  .
\end{equation}
Thus
\begin{align*}
f(F_{\Gamma_1}(\dot \Gamma_1)) -f(F_{\gamma_1}(\dot \gamma_1))\ge  \epsilon [\dot h \delta - h k ]\ge \epsilon r e^{\sigma t} \ge \epsilon r e^{\sigma a}>0 .
\end{align*}
The last inequalities follow taking $h=r e^{\sigma t}$ with $\sigma=\frac{1}{\delta}[1+ k ]$.
The coordinate definition of $\Gamma_1$ shows that this curve is absolutely continuous, while the previous inequality proves that it  has a causal tangent vector almost everywhere, namely it is a continuous causal curve.

Next the length of $\Gamma_1$ is bounded as follows
\begin{align}
\ell(\Gamma_1)&=\int_{ a}^0F_{\Gamma_1}(\dot \Gamma_1)\dd t=\int_{N_{[ a,0]}} F_{\Gamma_1}(\dot\Gamma_1)\dd t+\int_{[a,0]\backslash N_{[a,0]}} F_{\Gamma_1}(\dot \Gamma_1)\dd t \nonumber\\
&\ge f^{-1}(\epsilon r e^{\sigma a} )\mu(N_{[a,0]})+ \int_{[a,0]} F_{\gamma_1}(\dot \gamma_1) \dd t \nonumber \\
&\ge f^{-1}(\epsilon r e^{\sigma a} )\mu(N_{[a,0]})+\ell(\gamma_1) , \label{nis}
\end{align}
where we used the fact that $f$ is  strictly increasing hence invertible.

\begin{figure}[ht]
\begin{center}
 \includegraphics[width=10cm]{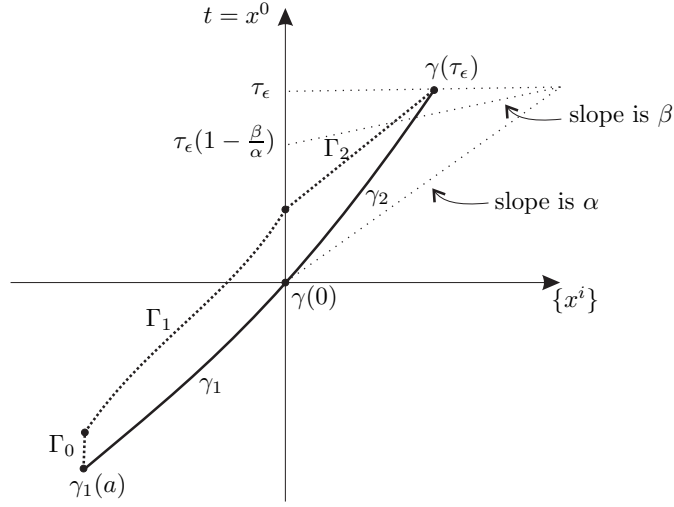}
\end{center}
\caption{The deformed curve $\Gamma$ is longer than $\gamma$, particularly due to the length of $\Gamma_1$. } \label{hra}
\end{figure}

We now turn to $\gamma\vert_{[0,b]}$ where we keep the $x^0$ parametrization. Since $\Gamma_1(0) \ne \gamma(0)$ we have to try to find $\tau_\epsilon > 0$ such that there exists a future directed causal curve $\Gamma_2$ from $\Gamma_1(0) = (\epsilon h(0), 0)$ to $\gamma_{\tau_\epsilon}= (\tau_\epsilon,\bar \gamma_{\tau_\epsilon})$ (see Fig.\ \ref{hra}) and such that for $\epsilon$ small enough
\[
\ell(\Gamma_0)+ \ell(\Gamma_1) +  \ell(\Gamma_2) > \ell(\gamma_1) + \ell(\gamma_2)
\]
where $\gamma_2 := \gamma\vert_{[0,\tau_\epsilon]}$. It should be noted
that the segment $\gamma_2$ of $\gamma$ depends on $\tau_\epsilon$ and hence depends on $\epsilon$ itself. It suffices to show $\ell(\Gamma_1) > \ell(\gamma_1) +\ell(\gamma_2)$. And using (\ref{nis}) we see that this holds if $\ell(\gamma_2)$ has an upper bound that scales with $\epsilon$ faster than  $f^{-1}(\epsilon r e^{\sigma a} )$. For instance, a linear bound would suffice, because $\lim_{\epsilon \to 0} f^{-1}(\epsilon r e^{\sigma a} )/\epsilon =+\infty$, as $f^{-1}(\epsilon r e^{\sigma a} )$ is locally at 0 lower bounded by any linear function.

We first consider the problem of bounding $\ell(\gamma_2)$, and then that of constructing $\Gamma_2$.

We know that $C$ is contained in the open cone of a flat Minkowski metric $\eta=(-\dd x^0)^2+\sum_i (\dd x^i)^2$, thus we can find a constant $K$ such that on $C_{\gamma(0)}$,
 \[
 F < K\sqrt{(\dd x^0)^2-\textstyle{\sum}_i (\dd x^i)^2}\le K \dd t.
 \]
 By continuity the same bound holds on a neighborhood of $\gamma(0)$, thus for sufficiently small $\epsilon$, 
 \[
 \ell(\gamma_2)\le K \tau_\epsilon\sim K \epsilon.
 \]
Let us construct $\Gamma_2$.
Let $A_{\gamma(0)}$ and $B_{\gamma(0)}$ be closed cones at $T_{\gamma(0)}M$ whose intersection with the hyperplane $\dd x^0=1$ give the same ellipsoid centered in $x^i=0, i=1,\cdots n$, up to a rescaling, and such that  $\dot \gamma(0), \p_0\in \textrm{Int}A_{\gamma(0)} $,  $A_{\gamma(0)} \subset \textrm{Int} B_{\gamma(0)}$ and $B_{\gamma(0)} \subset (\textrm{Int}C)_{\gamma(0)}$. Extend the cones $A$ and $B$ over the coordinate neighborhood by using the affine structure induced by the coordinates. By continuity the mentioned inclusions still hold in a neighborhood of $\gamma(0)$.  In what follows, with slight abuse of notation, we denote with $A$ the cone or its exponential map, and similarly for $B$.
Let us make a linear change of the spatial coordinates so that $A$ and $B$ really become cones with spherical sections in the new coordinates.
Let $\alpha$ and $\beta$ be the slopes of $A$ and $B$ with respect to the hyperplane $\dd x^0=0$. We have $\beta<\alpha$. Since $\gamma$ is differentiable at $0$ and $\dot \gamma^0(0)=1$, we have  and $\tau_\epsilon/\epsilon \to 1$, as $\epsilon \to 0$.  Moreover, for sufficiently small $\epsilon$, $\gamma_2$ is included in the cone $A\subset U$ with origin in $\gamma(0)$. On the $x^0$-axis any point with $0\le x^0<\tau_\epsilon (1-\beta/\alpha)$ reaches with a $B$-causal curve (hence $C$-timelike) every point on the $A$ cone with time coordinate $\tau_\epsilon$, hence $\gamma_2(\tau_\epsilon)$.
If we take $r$ sufficiently small it is the case that
for  sufficiently small $\epsilon$
\[
r\epsilon=
h(0)\epsilon < \tau_\epsilon (1-\beta/\alpha),
\]
thus it is possible to connect $\Gamma_1(0)$ with $\gamma_2(\tau_\epsilon)$ with a coordinate-straight $C$-timelike  curve $\Gamma_2$.
\end{proof}

\section{Geodesics in locally Lipschitz spaces}

The space $(M,F)$ is said to be a {\em locally Lipschitz Lorentz-Finsler space} if $C^\times$ (and hence $C$) is a locally Lipschitz cone structure. Whenever $C^\times$ is a locally Lipschitz cone structure lightlike geodesics can be defined unambiguously on $(M^\times, C^\times)$ as locally achronal continuous causal curves. It has been suggested in \cite{minguzzi17} that {\em  causal geodesics} on $(M, F)$ can be defined as projections of lightlike geodesics of $(M^\times, C^\times)$.

The next result uses \cite[Theorem 2.56]{minguzzi17} and \cite[Proposition 2.27]{minguzzi17} and establishes that in locally Lipschitz proper Lorentz-Finsler spaces there is only one natural notion of causal geodesic.

\begin{theorem} \label{kki}
Let $(M,F)$ be a locally Lipschitz proper Lorentz-Finsler space (such that $F(\p C)=0$). The projections of (locally) achronal continuous causal curves coincide with the (resp.\ locally) maximizing continuous causal curves.
\end{theorem}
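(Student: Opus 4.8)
The plan is to reduce everything to one statement about the chronological relation of the lifted cone structure $(M^\times,C^\times)$, which I expect to be the content abstracted from \cite[Proposition 2.27]{minguzzi17} and \cite[Theorem 2.56]{minguzzi17}: for $P=(p,r)$, $Q=(q,s)$ in $M^\times$ one has $P\ll Q$ if and only if $d(p,q)>\vert s-r\vert$. The ``only if'' part I would get by a direct estimate: a timelike curve in $M^\times$ from $P$ to $Q$, written $(\hat x,\hat r)$, projects to a timelike curve $\hat x$ in $M$ with $F(\dot{\hat x})>0$ a.e.\ (here I use $F^{-1}(0)=\p C$, so that $\mathrm{Int}\,C^\times$ consists precisely of the $M$-timelike directions), and then $\vert s-r\vert\le\int\vert\dot{\hat r}\vert\,\dd t<\int F(\dot{\hat x})\,\dd t=\ell(\hat x)\le d(p,q)$, the strict inequality holding since $\vert\dot{\hat r}\vert<F(\dot{\hat x})$ a.e.\ on a set of positive measure. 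The ``if'' part is the substantive one: given $d(p,q)>\vert s-r\vert$ I would first invoke push-up for proper cone structures to produce a timelike curve in $M$ from $p$ to $q$ whose length still exceeds $\vert s-r\vert$, and then prescribe the vertical component $\dot r$ along it with $\vert\dot r\vert<F(\dot x)$ so as to realize the increment $s-r$, obtaining the required timelike curve in $M^\times$. For the local versions of the statement I would work throughout inside the globally hyperbolic neighborhoods of \cite[Proposition 2.10]{minguzzi17}, where $d$ is finite and maximizers exist by \cite[Theorem 2.55]{minguzzi17}.

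Granting this characterization, the inclusion ``projections of achronal curves $\subseteq$ maximizers'' follows by a squeeze. If $\sigma=(x,r)\colon I\to M^\times$ is an achronal continuous causal curve, then $x$ is a continuous causal curve in $M$ (the condition $(\dot x,\dot r)\in C^\times$ forces $\dot x\in C_x$), and for all $t_1<t_2$ causality of $\sigma$ gives
\[
\vert r(t_2)-r(t_1)\vert\le\int_{t_1}^{t_2}\vert\dot r\vert\,\dd t\le\int_{t_1}^{t_2}F(\dot x)\,\dd t=\ell(x\vert_{[t_1,t_2]})\le d(x(t_1),x(t_2)),
\]
while achronality of $\sigma$ together with the characterization gives the reverse inequality $d(x(t_1),x(t_2))\le\vert r(t_2)-r(t_1)\vert$. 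Hence every inequality is an equality; in particular $\ell(x\vert_{[t_1,t_2]})=d(x(t_1),x(t_2))$ on every subinterval, i.e.\ $x$ is a maximizer. If $\sigma$ is only locally achronal the same computation, run on the subintervals on which $\sigma$ is achronal relative to a small neighborhood, shows that $x$ is locally maximizing.

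For the reverse inclusion I would lift. Given a (resp.\ locally) maximizing continuous causal curve $x\colon I\to M$, set $r(t):=\int_{t_0}^{t}F(\dot x)\,\dd t$, which is absolutely continuous since $F(\dot x)$ is locally integrable (the length of a maximizer over subintervals contained in globally hyperbolic neighborhoods is finite). Then $\sigma:=(x,r)$ has $(\dot x,\dot r)=(\dot x,F(\dot x))\in\p C^\times$ a.e., so it is a (lightlike) continuous causal curve in $M^\times$ projecting onto $x$. For $t_1<t_2$, maximality of $x$ gives $r(t_2)-r(t_1)=\ell(x\vert_{[t_1,t_2]})=d(x(t_1),x(t_2))$; since this is an equality, not a strict inequality, the characterization shows that $\sigma(t_1)$ and $\sigma(t_2)$ are not chronologically related, so $\sigma$ is achronal --- and in the local case, arguing on each subinterval on which $x$ maximizes (inside a globally hyperbolic neighborhood), locally achronal. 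Combining the two inclusions gives the claimed coincidence.

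I expect the main obstacle to be the ``if'' half of the chronology characterization --- upgrading a causal curve in $M$ of prescribed positive length to a genuinely timelike one of essentially the same length, so that it admits a lift with vertical component strictly inside $C^\times$. This is precisely where properness of the Lorentz-Finsler space, via \cite[Proposition 2.27]{minguzzi17} and \cite[Theorem 2.56]{minguzzi17}, is indispensable; everything else is bookkeeping with the reverse triangle inequality for $d$ (used to pass from a maximizer to the maximality of each of its subsegments) and with the elementary bound $\vert\int\dot r\vert\le\int\vert\dot r\vert\le\int F(\dot x)$.
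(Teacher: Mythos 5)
Your proposal is correct and follows essentially the same route as the paper: your chronology characterization $P\ll Q\iff d(p,q)>\vert s-r\vert$ is exactly the paper's two key computations (the a.e.\ identity $\dot z=\pm F(\dot\gamma)$ giving $\vert z_1\vert\le\ell(\gamma)$, and the lift $X(t)=(x(t),\tfrac{z_1}{\ell(x)}\ell(x\vert_{[0,t]}))$ of the timelike curve supplied by \cite[Theorem 2.56]{minguzzi17}) packaged as a lemma, and your squeeze is the contrapositive of the paper's contradiction argument. The only organizational difference is that you prove directly the inclusion the paper delegates to \cite[Proposition 2.27]{minguzzi17}; there your argument should also rule out $\sigma(t_2)\ll\sigma(t_1)$ for $t_1<t_2$ (no causality condition is assumed), which follows because such a relation would force $d(x(t_1),x(t_2))=\infty$, contradicting $d(x(t_1),x(t_2))=\ell(x\vert_{[t_1,t_2]})<\infty$ for a maximizer.
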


\begin{proof}
Every maximizing continuous causal curves is indeed the projection of an achronal  continuous causal curve by \cite[Proposition 2.27]{minguzzi17}. In fact for a locally Lipschitz proper cone structure the various notions of achronality really coincide, cf.\ \cite[page 30]{minguzzi17}. Suppose that a continuous causal curve $\gamma\colon [0,1]\to M$, $\gamma(0)=p$, $\gamma(1)=q$, is the projection of an achronal continuous causal curve $\Gamma\colon [0,1]\to M^\times$. By translational invariance we can assume that $\Gamma(0)=(p,0)$ and $\Gamma(1)=(q,z_1)$ with $z_1\ge 0$, otherwise we can reflect $\Gamma$ on the hyperplane $z=0$. By \cite[Theorem 2.17]{minguzzi17} $\Gamma$ has lightlike tangent wherever it is differentiable, hence almost everywhere. This condition reads $\dot z=\pm F(\dot \gamma)$, hence $0\le z_1\le \ell(\gamma)$.

Assume, by contradiction, that $\gamma$ is not maximizing, $d(p,q)>\ell(\gamma)\ge z_1\ge 0$.
Let $R$ be such that $z_1<R<d(p,q)$.  By  \cite[Theorem 2.56]{minguzzi17} there is a timelike curve $x\colon [0,1]\to M$  with endpoints $p$ and $q$ such that $R<\ell(x)\le d(p,q)$. The curve $X(t)=(x(t),\frac{z_1}{\ell(x)} \ell( x\vert_{[0,t]}) )$ is timelike and connects $\Gamma(0)$ to $\Gamma(1)$ thus $\Gamma$ is not achronal, a contradiction. The locally achronal case is obtained localizing the argument.
\end{proof}

It is interesting to clarify the connection between the spaces introduced in  Theorem \ref{jol} and the locally Lipschitz Lorentz-Finsler spaces. The spaces of Theorem \ref{jol} are clearly more restrictive because through the function $f$ they constrain the behavior of $F$ near the boundary of the light cone in a way independent of the lightlike vector approached. The spaces of Theorem  \ref{jol}, just slightly strengthened, are indeed locally Lipschitz Lorentz-Finsler  spaces.

\begin{theorem} \label{gud}
The spaces considered in Theorem \ref{jol} for which (b) is replaced by the stronger condition
\begin{itemize}
\item[(b')] the  map $(x,v)\mapsto f(F_x(v))$ is vertically strongly differentiable, in the sense that in a coordinate trivialization there is a linear map $A_{(x,v_1)}\colon \mathbb{R}^{n+1}\to \mathbb{R}$, continuous in $(x, v_1)$, such that\footnote{Due to the continuity of $A$ we  can replace $A_{(x,v_1)}$ for $A_{(\bar x,\bar v)}$ in Eq.\ (\ref{mko}), then we see that this definition coincides with that of partial strong differentiation with respect to $v$ as defined by Nijenhuis \cite{nijenhuis74}. Equation (\ref{mko}) is in a form more closely related to the conditions imposed in Whitney extension theorem \cite{whitney34}.}
\begin{equation}\label{mko}
R(x,v_1,v_2)= \frac{ f(F_x(v_2))-f(F_x(v_1))-A_{(x,v_1)}(v_2-v_1)}{\vert v_2-v_1\vert}
\end{equation}
goes uniformly to zero  as $v_2,v_1\to \bar v$, $x\to \bar x$, and moreover  $A_x=\dd f(F_x)\ne 0$ on $\p C_x$  (thus $\p C_x$ is $C^1$),
\end{itemize}
are locally Lipschitz Lorentz-Finsler spaces.
\end{theorem}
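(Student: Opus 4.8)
\noindent\emph{Proof plan.}
By definition it is enough to show that $C^\times$ is a locally Lipschitz cone structure on $M^\times=M\times\mathbb{R}$; that $C$ is then locally Lipschitz has already been recorded. Concretely, near any $P_0=(p_0,r_0)$ we would set up coordinates in which the cones take the epigraph form $C^\times_{(x,r)}=\{(y^0,\vec y,z):\,y^0\ge\mathcal{H}_x(\vec y,z)\}$ for a fibrewise positively homogeneous convex function $\mathcal{H}_x$, and then show that $(x,\vec y,z)\mapsto\mathcal{H}_x(\vec y,z)$ is locally Lipschitz near each point with $(\vec y,z)\neq0$ (homogeneity then handles the remaining directions). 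To get such coordinates: fix $\partial_0\in\mathrm{Int}\,C_{p_0}$ and a covector in the interior of the dual cone of $C_{p_0}$, complete them to coordinates $(x^0,\dots,x^n)$ near $p_0$ with $\partial/\partial x^0=\partial_0$ and $dx^0$ the chosen covector, and shrink the chart (semicontinuity of $C$) so that $\partial_0\in\mathrm{Int}\,C_x$ and $dx^0>0$ on $C_x\setminus0$ throughout; adjoin the factor coordinate $r$ and the induced fibre coordinates $(y,z)=(y^0,\dots,y^n,z)$ on $TM^\times$. Since $F_x(\partial_0)>0$ (recall $F_x$ is concave, non-negative, not identically zero and vanishes exactly on $\partial C_x$) one has $(\partial_0,0)\in\mathrm{Int}\,C^\times_{(x,r)}$, while $dy^0>0$ on $C^\times\setminus0$ (if $(y,z)\in C^\times_{(x,r)}\setminus0$ then $y\in C_x$, and $y=0$ would force $z=0$). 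As $C^\times_{(x,r)}$ is a closed convex cone, invariance under adding non-negative multiples of $(\partial_0,0)$ together with $dy^0>0$ on it forces the epigraph form, with $\mathcal{H}_x$ convex, positively homogeneous and independent of $r$.

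The next step is to realise $\mathcal{H}_x$ through an implicit equation: for $(\vec y,z)\neq0$ the vector $v:=(\mathcal{H}_x(\vec y,z),\vec y)$ is a nonzero element of $C_x$ with $F_x(v)=|z|$, that is
\[
\Psi\big(x,\mathcal{H}_x(\vec y,z),\vec y,z\big)=0,\qquad \Psi(x,y^0,\vec y,z):=f\big(F_x(y^0,\vec y)\big)-f(|z|).
\]
Since $\Psi$ is strictly increasing in $y^0$ along the $\partial_0$-direction (super-additivity of $F_x$ and $F_x(\partial_0)>0$) and the extension of $f(F)$ may be taken negative off $C$ (as in the proof of Theorem \ref{jol}), this equation determines $\mathcal{H}_x$ locally uniquely. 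I would then apply a Lipschitz implicit function theorem, for which one needs: (i) $\Psi$ is jointly locally Lipschitz --- in $x$ by hypothesis (a), in $y$ uniformly on compacta because (b') makes $v\mapsto f(F_x(v))$ of class $C^1$ with continuous vertical derivative $A_{(x,v)}$, and in $z$ because $f$ is $C^1$; (ii) $\partial\Psi/\partial y^0=A_{(x,(y^0,\vec y))}(\partial_0)$ is continuous and, near the point under consideration, bounded below by some $c_0>0$. Given these, the strict monotonicity in $y^0$ yields directly $|\mathcal{H}_{x_1}(\vec y_1,z_1)-\mathcal{H}_{x_2}(\vec y_2,z_2)|\le (L/c_0)\,(|x_1-x_2|+|\vec y_1-\vec y_2|+|z_1-z_2|)$, with $L$ a joint Lipschitz constant of $\Psi$, which is the claim.

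The main obstacle is the lower bound in (ii); by continuity it reduces to proving $A_{(x,v)}(\partial_0)>0$ for every $v\in C_x\setminus0$ with $x$ near $p_0$. For $v\in\mathrm{Int}\,C_x$ this is precisely the computation already carried out in the proof of Theorem \ref{jol}: convexity of $f$ and the reverse triangle inequality give $A_{(x,v)}(\partial_0)=df(F_x)(v)(\partial_0)\ge F_x(\partial_0)\,f'(F_x(v))>0$. For $v\in\partial C_x$ I would argue geometrically: super-additivity of $F_x$ on $C_x$ together with $f\ge0$ and $f(0)=0$ force $A_{(x,v)}\ge0$ on $C_x$ and $A_{(x,v)}(v)=0$, so $\{A_{(x,v)}=0\}$ is a supporting hyperplane of the convex cone $C_x$ at $v$ --- a genuine hyperplane precisely because $A_{(x,v)}=df(F_x)(v)\neq0$ on $\partial C_x$, which holds already under (b) --- whence $\mathrm{Int}\,C_x\subset\{A_{(x,v)}>0\}$ and in particular $A_{(x,v)}(\partial_0)>0$. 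This transversality of $\partial_0$ to $\partial C_x$, i.e. the non-degeneracy of the vertical differential of $f(F)$ at the cone boundary, is exactly what $F$ itself lacks (its vertical differential blows up there); restoring it through the regulariser $f$ is the heart of the argument, the strengthening (b') over (b) being used precisely to secure the uniform control in (i). Once everything is in place the proof is complete, since ``$C^\times$ locally Lipschitz'' is by definition the statement that $(M,F)$ is a locally Lipschitz Lorentz-Finsler space, and it entails ``$C$ locally Lipschitz'' as noted.
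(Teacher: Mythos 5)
Your argument is correct in substance but follows a genuinely different route from the paper's. The paper works with the cone sliced by $\{y^0=1\}$ and estimates the Hausdorff distance $D(x_1,x_2)$ of the sliced boundaries directly: it picks boundary points $y_1,y_2$ realizing $D$, uses that $y_1-y_2$ is orthogonal to one of the sliced boundaries (hence parallel to $\nabla_y u$ with $u=f\circ F\vert_{\{y^0=1\}}$), and expands $u$ to first order at the moving base point $(x_2,y_2)$; controlling the remainder of that expansion uniformly as the base point moves with $x_2$ is precisely where (b') enters, and the proof is completed by a covering argument for this uniformity plus a connectedness argument to propagate the local estimate across the chart. You instead write $\p C^\times_{(x,r)}$ as a graph $y^0=\mathcal{H}_x(\vec y,z)$ over the directions transverse to $\p_0$ and get the Lipschitz bound from a monotone implicit-function argument, the key input being the transversality bound $A_{(x,v)}(\p_0)\ge c_0>0$ on the compact sliced boundary --- exactly the inequality already established in the proof of Theorem \ref{jol} (your supporting-hyperplane argument for $v\in\p C_x$ is equivalent to the kernel-dimension argument used there). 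Both proofs treat $C^\times$ by the same device, replacing $f\circ F$ by $f(F_x(y))-f(|z|)$; you do it in one pass, the paper does $C$ first and then observes that the argument transfers. Two remarks. First, your plan leaves implicit the propagation of the estimate from ``$x_2$ near $x_1$'' to a full chart; since your constants $L$ and $c_0$ are uniform over a neighborhood of $p_0$, the chaining argument the paper uses closes this, but it should be said. Second, and more interestingly, your route appears not to use the uniformity in $x$ of the remainder $R$ at all: because you integrate the vertical derivative along the fixed direction $\p_0$ rather than invoking a single first-order expansion at a moving boundary point, joint continuity of $A_{(x,v)}$ --- i.e.\ condition (b), which already makes $v\mapsto f(F_x(v))$ of class $C^1$ with derivative bounded on compacta uniformly in $x$ --- suffices for both the joint Lipschitz bound on $\Psi$ and the lower bound on $\p\Psi/\p y^0$. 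If that is right, your argument proves the conclusion for the spaces of Theorem \ref{jol} without the strengthening (b'); this is worth checking carefully against the intended reading of (b) (namely that the vertical differential of the \emph{extension} exists and is continuous on a full neighborhood of $C$), since it would sharpen the theorem.
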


Notice that we have simply added the uniformity of $R$ in $x$. Our feeling is that this condition placed just on $C$ could be sufficient to guarantee the existence of the extension of $f(F)$ used in Theorem  \ref{jol}. Unfortunately, this result would require a certain improvement of Whitney extension theorem  (due to the presence of the additional parameter $x$) that we do not try to elaborate here.

This  proof is an improvement of the proof in \cite[Theorem 2.52]{minguzzi17}. The conditions $(a)$ and $(b')$ are satisfied in the metric locally Lipschitz case thus recovering one direction of \cite[Theorem 2.51]{minguzzi17}.

\begin{proof} Let us prove that $C$ is locally Lipschitz.
Let $\bar x\in M$, and let $U$ be a  coordinate neighborhood of $\bar x$. Let us consider the trivialization of the bundle $T U$, as induced by the coordinates. We are going to focus on the subbundle of $TU$ of vectors that in coordinates read as follows $(x^\alpha, y^\alpha)$ where $y^0=1$, i.e. we are going to work on $U\times \mathbb{R}^n$.
It will be sufficient to prove the locally Lipschitz property for the distribution $x\mapsto S_x$, where $S_x$ is the boundary of the sliced cone at $x$. Let $\vert \cdot \vert$ be the Euclidean norm on $\mathbb{R}^n$. Let  us consider the  function $u:=f\circ F\vert_{\{y^0=1\}}$.
Since the cone distribution over the sliced subbundle has compact fibers, we can find $U$ sufficiently  small such that there is a constant $A>0$,  with $ \vert \nabla_y u\vert>A$ for all lightlike vectors on the sliced subbundle (the labels $x$ and $y$ refer to base and  vertical variables, respectively).


 Let $y_1\in S_{x_1}$ and $y_2\in S_{x_2}$ be two points that realize the Hausdorff distance $D(x_1,x_2)$ between the sliced cone boundaries, i.e. $D(x_1,x_2)= \vert \delta y \vert$, $\delta y=y_1-y_2$.
 The definition of Hausdorff distance easily implies that $\delta y$ is orthogonal to one
of the sliced cone boundaries. Let it be that of $x_2$, the other case being similar. So we have $\delta y\propto \nabla_y u(x_2,y_2)$, and hence $\vert\nabla_y u \cdot \delta y\vert=\vert\nabla_y u \vert \vert \delta y\vert$. Let $\delta x=x_1-x_2$, $u(x_1,y_1)-u(x_2,y_2)=0$. By the continuity of the cone distribution we have $\delta y\to 0$  as $\delta x \to 0$. Moreover,
\begin{align*}
0&=u(x_1,y_1)-u(x_2,y_2)=u(x_1,y_1)-u(x_2,y_1)+u(x_2,y_1)-u(x_2,y_2)
\\
&=u(x_1,y_1)-u(x_2,y_1)+\nabla_y u(x_2,y_2) \cdot \delta y +R(x_2,y_2,y_1) \vert\delta y\vert
\end{align*}
thus
\begin{align*}
\vert \nabla_y u(x_2,y_2) \vert \, \vert \delta y \vert&=\vert \nabla_y u(x_2,y_2) \cdot \delta y \vert \le \vert u(x_2,y_1)-u(x_1,y_1)\vert+ \vert R(x_2,y_2,y_1)  \vert \vert\delta y\vert\\
&\le K \vert \delta x\vert+ \vert R(x_2,y_2,y_1)  \vert \vert\delta y\vert,
\end{align*}
where $K$ is the local Lipschitz constant for $f\circ F$ and hence $u$ over $\bar U$.
The other case in which $\delta y$ is orthogonal to $S_{x_1}$ would have lead to
\begin{align*}
\vert \nabla_y u(x_1,y_1) \vert \, \vert \delta y \vert&=\vert \nabla_y u(x_1,y_1) \cdot \delta y \vert \le \vert u(x_1,y_2)-u(x_2,y_2)\vert+ \vert R(x_1,y_2,y_1)  \vert \vert\delta y\vert\\
&\le K \vert \delta x\vert+ \vert R(x_1,y_2,y_1)  \vert \vert\delta y\vert.
\end{align*}
For every $\epsilon>0$ and $\bar y\in S_{\bar x}$, $\bar x=x_1$, we can find a neighborhood $V\ni \bar x$ and $\delta >0$ such that
for $x'\in V$ and $y',y''$ such that $\vert y'-\bar y\vert<\delta$, $\vert y''-\bar y\vert<\delta$, we have $\vert R(x',y',y'')\vert<\epsilon$.
But $S_{\bar x}$ admits a finite covering with balls of radius $\delta^i/2$ centered at points $y^i\in S_{\bar x}$, $i=1,\cdots, k$, thus given $\epsilon>0$ and $\bar x$ and defined $V=\cap_i V^i$  and $\delta=\min \delta^i/2$,  we have for every $\bar y\in S_{\bar x}$ and $y'$, $\vert y'-\bar y\vert<\delta$, and $x'\in V$, that $\vert R(x',y',\bar y)\vert<\epsilon$. Thus given $\epsilon >0$, $\epsilon\le A/2$, we can choose $x_2$ so close to $x_1$ that $\vert \delta y\vert<\delta$ (due to the continuity of the cone distribution), and $x_2\in V$, which implies both $\vert R(x_1,y_2,y_1)\vert<\epsilon$ and $\vert R(x_2,y_2,y_1)\vert<\epsilon$.

Thus for $x_2$ sufficiently close to $x_1$ we have $D(x_1,x_2) \le \frac{K}{A-\epsilon}\le \frac{2K}{A} \vert \delta x\vert$ where the  constant $\frac{2K}{A}$ does not depend on $x_1 \in U$. In fact we can get the inequality for any $x_2\in U$ connected to $x_1$ by a segment $x(s)=(1-s)x_1+sx_2$ in $U$. We just need  to consider the  maximal interval $I\ni 0$, such that
\[
I\backslash\{0\}\subset \left\{s: \frac{D(x_1,x(s))}{\vert x(s)-x_1 \vert}\le \frac{2K}{A}\right\}.
\]
 We have just proved that $I$ is a non-trivial interval of 0. It is closed due to the continuity of the cone distribution, but it is also open because if $\bar s$ belongs to it then in a small neighborhood of  $\bar s$, for any $s>\bar s$, by the previous argument (independent of $x_1$) $D(x(\bar s),x(s))\le \frac{2K}{A} \vert x(s)-x(\bar s)\vert$, hence
\begin{align*}
D(x_1,x(s))&\le D(x_1,x(\bar s))+D(x(\bar s),x(s))\\
&\le \frac{2K}{A} \{\vert x(\bar s)-x_1\vert+\vert x(s)-x(\bar s)\vert\}=\frac{2K}{A}\vert x( s)-x_1\vert,
\end{align*}
which proves that $I$ is also open. We conclude that $s=1\in I$, and hence, due to the arbitrariness of $x_1$, that for any $x_1,x_2 \in U$, $D(x_1,x_2)\le \frac{2K}{A} \vert x_2-x_1\vert$, i.e.\ $C$ is locally Lipschitz.


 It can be observed that this argument used only properties $(a)$ and $(b')$ of $f\circ F$. Now, for the local Lipschitzness of $C^\times$ we need only to show that there is a  function over $ C^\times\cap \{z\ge 0\}$, where $z$ is the extra tangent space coordinate, with the same properties and that vanishes on $C^\times$ (it need not be constructed from a Finsler function $F^\times$ as $u$ is)
 Evidently, the function $-f(F_x(y))+f(z)$, has the desired properties.
\end{proof}

\section{Conclusions}

It is natural to expect that timelike geodesic completeness should prevent the possibility of spacetime extensions. Roughly, the idea is that
any extension would introduce some timelike geodesic crossing the boundary of the original spacetime, which then could not be timelike complete.
By following this type of strategy Galloway, Ling and Sbierski indeed proved that spacetime  inextendibility follows from timelike completeness  \cite{galloway18b}. Unfortunately, they made an unwanted assumption of global hyperbolicity which, with this work, we have shown to be unnecessary.


Our result (Theorem \ref{jif}) states that $C^0$ proper Lorentz-Finsler spaces cannot be extended while preserving this regularity, provided they are timelike complete in a suitable sense.
Therefore, the regularity class is the same for the original and the extended spacetimes and, more importantly,  our theorem  holds true for  Finslerian anisotropic spacetimes. Notice that these spacetimes are much more general than Lorentzian spacetimes because of the degrees of freedom entering the shape of the cone distribution and the Finsler function $F$. We have also shown that many satellite results developed in \cite{galloway17b,galloway18b} keep their validity in the Finslerian framework.

While these $C^0$ extendibility studies were motivated by the Cosmic Censorship Conjecture, other motivations have promoted the recent development of non-regular spacetime geometry. One  idea is that the study of low regularity spacetimes  can help to identify those causality concepts that  might preserve their physical significance at a more fundamental (quantum) scale.

It was then of particular importance to generalize Graf and Ling's theorem on the causal character of maximizers to the Lorentz-Finsler case. Interestingly, and compatibly with the above idea, the generalization   has turned out to hold provided a certain  velocity-momentum map $v \mapsto p=\dd L(v)$  is well defined for massless as well as massive particles (condition (b) of Theorem \ref{jol}). The physical significance of this condition confirms that the study of non-regular and Finslerian anisotropic spacetimes provides new insights into the geometry of gravitation.

Finally, we proved that for the locally Lipschitz Lorentz-Finsler spaces in the sense of \cite{minguzzi17} all the definitions of geodesics really coincide (Theorem \ref{kki}), and that the locally Lipschitz spaces of Theorem \ref{jol} for which the causal characterization of maximizers holds belong to that category provided their regularity conditions are only slightly strengthened (Theorem \ref{gud}).


\end{document}